\DeclareMathOperator*{\esssup}{ess\,sup}
\DeclareMathOperator*{\argmax}{arg\,max}
\newtheorem{theorem}{Theorem}
\newtheorem{lemma}{Lemma}
\newtheorem{ass}{Assumption}
\newtheorem{definition}{Definition}
\newtheorem{remark}{Remark}
\newtheorem{cor}{Corollary}
\newcommand\EE {\mathbb E}
\newcommand\FF {\mathbb F}
\newcommand\NN {\mathbb N}
\newcommand\RR {\mathbb R}
\newcommand\PP {\mathbb P}
\newcommand\WW {\mathbb W}
\def\qed{\hskip6pt\vrule height6pt width5pt depth1pt}
\title{Contracting a Crowd of Heterogeneous Agents\thanks{E.~Bayraktar is partially supported by NSF grant DMS-2507940 and the Susan M. Smith Chair. I.~Ekren is partially supported by NSF grant DMS-2406240.}}
\author{Guillermo Alonso Alvarez\thanks{Department of Mathematics, University of Michigan. \protect\url{guialv@umich.edu}} \and Erhan Bayraktar\thanks{Department of Mathematics, University of Michigan. \protect\url{erhan@umich.edu}} \and Ibrahim Ekren\thanks{Department of Mathematics, University of Michigan. \protect\url{iekren@umich.edu}}}
\begin{document}
	\maketitle
\begin{abstract}

We study optimal contract design for large populations of heterogeneous agents whose actions generate network spillovers represented by an interaction function. In a linear-quadratic framework, we solve the finite-agent problem and its continuum limit, obtaining explicit optimal contracts and equilibrium efforts. We show that the continuum contract can be evaluated on a large finite sample of agents to obtain admissible contracts that achieve the finite-agent principal's value up to an error of order \(1/N\). This provides a scalable approximation for settings with many interacting agents. We also prove stability with respect to perturbations of the interaction function and provide comparative statics and numerical examples showing how network position affects effort, incentives, and the principal's value. The results identify how optimal incentives should be targeted toward agents whose actions generate larger spillovers.

\end{abstract}

\section{Introduction}

Many contracting problems require a principal to incentivize a large population of agents whose efforts are interdependent. Examples include firms that reward workers in teams, platforms that manage interconnected users, and supply chains where incentives must be coordinated across buyers, suppliers, and intermediaries. In these environments, one agent's effort may affect not only that agent's own output but also the productivity of other agents. These spillovers are heterogeneous: some agents are more influential than others, and the value of incentivizing an agent depends on where that agent is located in the interaction structure.

This paper studies optimal contracting in such environments. We consider a continuous-time principal-agent model in which agents interact through an interaction function $G:[0,1]^2\to \RR$, where $G(u,v)$ measures how the output of type $u$ is affected by effort generated by source type $v$. In a linear-quadratic specification, we solve both the finite $N$-agent model and its continuum-agent counterpart explicitly. The continuum model provides a tractable description of large contracting problems while preserving heterogeneity in network position. We prove that equilibrium efforts, optimal contracts, and the principal's value converge as $N\rightarrow\infty$. Moreover, the continuum contract induces an admissible finite-agent contract that is $\mathcal{O}(1/N)$-optimal.

The starting point is the continuous-time moral-hazard literature. Holmström and Milgrom \cite{holmstrom1987aggregation} characterize optimal incentives in a tractable setting where a single agent controls the drift of output. Sannikov \cite{sannikov2008continuous} introduced the continuation-value approach, and Cvitanić, Possamaï, and Touzi \cite{cvitanic2018dynamic} developed a general stochastic-control formulation of continuous-time principal-agent problems. We retain this framework but replace the single-agent environment with a large population connected by heterogeneous spillovers.

This interaction structure changes the economics of optimal incentives. The principal must account not only for an agent's own response to compensation but also for the way induced effort propagates through the network. As a result, optimal contracts depend on both direct productivity and outward influence. We analyze how optimal contracts use steeper compensation to incentivize agents with greater outward influence. We also establish incentive comparative statics with respect to influence; see Theorem~\ref{Lemma.qualitative} and Theorem~\ref{Theorem.vp.influence}. Moreover, we show stability of both equilibrium efforts and contracts under perturbations of the interaction function; see Theorem~\ref{lemma.Q.stability} and Theorem~\ref{theorem.contracts.stability}.

Our analysis also builds on mean-field game methods. Mean-field games approximate large strategic systems through the interaction between a representative agent and the population distribution. In continuous time, this approach was introduced independently by Lasry and Lions \cite{lasrylions2,lasrylions3,lasrylions1} and Huang, Malhamé, and Caines \cite{CainesMeanField,CainesMeanfield2}. The principal-agent literature with mean-field interactions begins with Élie, Mastrolia, and Possamaï \cite{manymanyagents}, who study a moral-hazard problem with infinitely many homogeneous agents. Their weak-formulation approach is close to Carmona and Lacker \cite{weakformulationmeanfieldgames}; the resulting principal's problem is then handled through mean-field control and dynamic programming, as in \cite{bellmanphamwei} and \cite{bayraktarcosso}. Related extensions include \cite{cvitanicbayraktarzhang}, \cite{possamaienergy}, \cite{bayraktaryuchongzhang}, \cite{djetestackelberg}, \cite{gockelauriere}, \cite{agencymastrolia}, and \cite{rankbasedfieldofagents}.

The main difference between our model and this literature is the form of interaction. Existing mean-field principal-agent models typically aggregate the population through a common distributional state. Here, the effect of one agent's effort on another agent's output depends on their positions in the interaction function. This makes the principal's problem sensitive to the source of effort: incentivizing an influential type can create value that is realized elsewhere in the population.

To capture this structure, we use ideas from mean-field games with interaction functions. Parise and Ozdaglar \cite{parise} and Caines and Huang \cite{CainesGraphon} introduced models in which continuum agents interact through a function $G:[0,1]^2\to \RR$ describing population connectivity. This line of work includes \cite{StochasticGraphonGamesCarmona}, \cite{wubayraktaruniformgraphon}, \cite{graphonpropagationerhan}, \cite{graphonsystems}, \cite{pariseozaglar2}, \cite{donhambayraktar}, \cite{hongyibayraktar}, \cite{phamgraphon}, \cite{phamgraphon2}, and \cite{donhanhebayraktar}. Such models allow agents to differ not only in their states but also in how they are connected to other parts of the population.

The heterogeneous continuum formulation creates technical difficulties. Working directly with uncountably many state and noise processes is delicate, and standard mean-field arguments are not immediately applicable. We follow the labeling idea of Lacker and Soret \cite{labelgraphon}: the agent's type is included as part of the state, and the population distribution records the joint law of type and output. This converts the heterogeneous system into a mean-field problem on the joint state-type space.

In the linear-quadratic model, we compute the optimal contracts and equilibrium effort profiles in both the finite and continuum problems; see Theorem~\ref{opt.contract.finite.example1} and Theorem~\ref{opt.contract.interaction.example.1}. We then prove three convergence results:
\begin{itemize}
	\item[$i)$] the agents' Nash equilibrium efforts converge as $N\rightarrow\infty$; see Theorem~\ref{lemmaQ};
	\item[$ii)$] the optimal contracts and the principal's value converge as $N\rightarrow\infty$; see Theorem~\ref{conv.Value};
	\item[$iii)$] the continuum optimal contract yields a finite-agent contract that satisfies the agents' reservation constraints, induces a unique Nash equilibrium effort profile, and is $\mathcal{O}(1/N)$-optimal for the principal; see Theorem~\ref{nearoptimal}.
\end{itemize}

At the methodological level, we adapt the homogeneous-agent approach of Élie, Mastrolia, and Possamaï \cite{manymanyagents} to a setting with heterogeneous interactions. Since their argument relies on homogeneity, we combine it with the labeling formulation of Lacker and Soret \cite{labelgraphon}. The resulting state variable is the joint distribution of output and type. We derive the associated Wasserstein Hamilton-Jacobi-Bellman equation and a verification lemma, Lemma~\ref{thm.verification}, which yields the optimal continuum contract. The verification result is stated in a general form and can be applied beyond the linear-quadratic example studied in detail here.

The remainder of the paper is organized as follows. Section~\ref{section.technicalsetup} develops the continuum contracting formulation and the verification lemma. Section~\ref{section.last} studies the linear-quadratic model. Subsection~\ref{example1.finite.agents} solves the $N$-agent problem, while Subsection~\ref{continuum.agents.example} solves the continuum problem using the labeling method and Lemma~\ref{thm.verification}. We then provide comparative statics and numerical simulations showing how connectivity affects the principal's value, agents' efforts, and optimal contracts. We also prove stability with respect to perturbations of the interaction function. Finally, Subsection~\ref{subsectionconvergence} establishes convergence from the finite-agent problem to the continuum-agent problem as $N\rightarrow\infty$. Readers primarily interested in the explicit linear-quadratic model may go directly to Section~\ref{section.last}; Section~\ref{section.technicalsetup} contains the verification machinery used to justify the continuum formulation.
\subsection{Notation}
Throughout the paper, we use the following notation. For $N\in \mathbb{N}$, define $[N]:=\{1,\ldots,N\}$. We denote by $\mathcal{C}:=C([0,T],\RR)$ the space of real-valued continuous functions endowed with the norm $\|\cdot\|_{\infty}$, where $\|x\|_{\infty} :=\sup_{0\leq t \leq T}|x(t)|$ for all $x\in \mathcal{C}$, and $|\cdot|$ denotes the usual Euclidean norm.

We denote by $\mathcal{P}^2(S)$ the set of probability measures on a normed space $(S,\|\cdot \|_{S})$ with finite second moments, and by $\mathcal{P}_{\text{Unif}}^2([0,1]\times\RR)$ the set of probability measures on $[0,1]\times \RR$ whose first marginal is uniform on $[0,1]$. For $\nu, \eta \in \mathcal{P}^2(S)$, define the Wasserstein-2 distance between $\nu$ and $\eta$ by
\begin{align*}    \mathcal{W}_2\left(\eta,\nu\right) := \inf_{\kappa \in\Lambda(\nu,\eta)}\left(\int_{S\times S}\|x-y\|_S^2\kappa(dx,dy)\right)^{1/2},
\end{align*}
where $\Lambda(\nu,\eta)$ denotes the set of measures $\kappa\in \mathcal{P}^2(S\times S)$ such that $\kappa(dx,S) = \nu$ and $\kappa(S,dy) = \eta$. Moreover, for all $\mu \in \mathcal{P}^2(S)$, define $\|\mu\|_2 := \left(\int_{S}\|x\|^2_S\mu(dx) \right)^{1/2}$.

	\section{Verification lemma for the principal-agent problem}\label{section.technicalsetup}
       In this section, we develop a general procedure for contracting problems with a continuum of heterogeneous agents. The method provides the tools needed to solve the model introduced in Section \ref{section.last}. Inspired by Lacker and Soret \cite{labelgraphon}, we describe the agents' stochastic game as a standard mean-field game involving a representative agent and a population distribution $\mu\in \mathcal{P}^2([0,1]\times \mathcal{C})$ that records the joint distribution of agent type and output. We then apply the method introduced in Élie et al. \cite{manymanyagents} to reduce the principal's problem to a mean-field control problem.
   \subsection{The principal-agent problem}

    Let $\Omega := [0,1]\times \RR \times \mathcal{C}$ be the canonical space. We denote by $(U,\Psi,W)$ the canonical random element in $\Omega$, representing the representative agent's type, initial output, and idiosyncratic noise, respectively. Let $\lambda \in \mathcal{P}^2_{\text{Unif}}\left([0,1]\times \RR\right)$ be the initial joint distribution of the agent's type and initial output. We define $\PP_0:= \lambda\otimes \WW \in \mathcal{P}^2(\Omega)$, where $\WW$ is the Wiener measure on $\mathcal{C}$, and denote by $\FF:= (\mathcal{F}_t)_{0 \leq t\leq T}$ the $\PP_0$-augmentation of the filtration $\sigma\Big(U,\Psi,\left\{W_s, 0 \leq s\leq t\right \}\Big)_{t\geq 0}$. In addition, we introduce the following classes of processes on $(\Omega,\mathcal{F}_T,\PP_0)$:
	\begin{align*}
		\mathbb{H}([0,T]) &:=\bigg\{ Z, \FF-\text{predictable } \bigg|\text{ }\mathbb{E}^{\PP_0}\left[\left(\int_0^T|Z_s|^2ds\right)^{p/2}\right]<\infty, \forall p\geq 0\bigg\}, \\
    \mathbb{H}_{\text{exp}}([0,T])&:= \left\{ Z,\FF-\text{predictable }\bigg| \text{ }\EE^{\PP_0}\left[\exp\left(p\int_0^T\big|Z_s\big|^2ds \right)\right]<\infty, \forall p\geq 0  \right\},\\
    \mathbb{D}_{\text{exp}}([0,T])&:= \left\{ Y,\FF-\text{adapted, càdlag }\bigg| \text{ }\EE^{\PP_0}\left[\exp\left(p\sup_{0\leq s\leq T}\big|Y_s\big| \right)\right]<\infty, \forall p\geq 0  \right\},\\
    \mathbb{L}_{\text{exp}}\left( \mathcal{F}_T\right)&:= \left\{ \xi, \mathcal{F}_T-\text{measurable } \bigg| \text{ }\EE^{\PP_0}\left[\exp\left(p|\xi|  \right)\right]<\infty, \forall p\geq 0  \right\}.
	\end{align*}

	We start by describing the dynamics of the output process $X$, which represents the project controlled by the representative agent. Let $\sigma: [0,T]\times [0,1]\times \mathcal{C} \to \RR$ be a progressively measurable map. We assume that the following stochastic differential equation admits a unique $\FF$-adapted strong solution:
	\begin{align*}
		X_t = \Psi+ \int_0^t\sigma_s(U,X)dW_s, \quad \PP_0 -a.s., \quad 0 \leq t \leq T.	\end{align*}
	We denote by $\alpha$ the representative agent's effort process, assumed to be real-valued and $\FF$-adapted. Next, we introduce the measurable mapping $b: [0,T]\times [0,1]\times\mathcal{C}  \times \mathcal{P}^2\left([0,1]\times\mathcal{C}\right)  \times \RR \to \RR$ satisfying the following condition.
	\begin{ass}\label{ass1}
		The mappings $b$ and $\sigma$ satisfy the following properties:
		\begin{itemize}
            \item[(a)] $0<\underline{\sigma}\leq \sigma_t(u,x)\leq \overline\sigma <\infty$, for all $(t,u,x) \in [0,T]\times[0,1]\times\mathcal{C}$.
			\item[(b)] The process $(t,u,x)\to b_t(u,x,\mu,a)$ is $\FF$-progressively measurable, for all $(\mu, a)\in \mathcal{P}^2([0,1]\times \mathcal{C})\times \RR$.
            \item[(c)] There exists a constant $C>0$ such that
			$$\big|b_t(u,x,\mu,a)\big| \leq C \left(1+\|x\|_{\infty}+|a|+\int_{[0,1]\times    \mathcal{C}}\|x\|_{\infty}\mu(du,dx) \right).$$
		\end{itemize}
	\end{ass}
    We denote by $\mathcal{A}$ the set of $\FF$-adapted effort processes $\alpha$ for which there exists $\epsilon>0$ such that
	\begin{align*}    &\mathbb{E}^{\mathbb{P}_0}\left[\mathcal{E}\left(\int_0^{\cdot}\sigma_s(U,X)^{-1}b_s(U,X,\mu,\alpha_s)dW_s \right)_T^{1+\epsilon} \right]+\EE^{\PP_0}\left[\exp\left(p\int_0^T|\alpha_s|^2ds  \right) \right]< \infty,
	\end{align*}
	 for all $p \geq 0$, and $\mu \in \mathcal{P}^2([0,1]\times \mathcal{C})$.

	Therefore, given $(\alpha,\mu)\in \mathcal{A} \times \mathcal{P}^2([0,1]\times \mathcal{C})$, we define the probability measure $\PP^{\alpha,\mu}$ determined by the following Radon-Nikodym derivative.
	\begin{equation*}
		\frac{d\PP^{\alpha,\mu}}{d\PP_0} := \mathcal{E}\left(\int_0^{\cdot}\sigma_s(U,X)^{-1} b_s(U,X,\mu,\alpha_s)dW_s\right)_T.
	\end{equation*}
	Using Girsanov's theorem, the process
	\begin{align*}    W_t^{\alpha,\mu}:= W_t -\int_0^t\sigma_s(U,X)^{-1}b_s(U,X,\mu,\alpha_s)ds,
	\end{align*}
	is a $\PP^{ \alpha,\mu}$-Brownian motion. Therefore, for all $(\alpha,\mu ) \in \mathcal{A}\times \mathcal{P}^2([0,1]\times \mathcal{C})$, the output process has the following dynamics under $\PP^{\alpha,\mu}$:
	\begin{align*}
		X_t = \Psi+\int_0^tb_s(U,X,\mu,\alpha_s)ds + \int_0^t\sigma_s(U,X)dW^{\alpha,\mu}_s,  \quad \PP^{\alpha,\mu}-a.s., \quad 0\leq t \leq T.
	\end{align*}
	\begin{remark}
		Following the formulation introduced by {Lacker and Soret} \cite{labelgraphon}, we encode the model's heterogeneity through an additional state variable $U$ labeling agent type. The main advantage of this approach is that it avoids the technical difficulties arising from uncountably many Brownian motions. Moreover, it allows us to express the agents' problem as a standard mean-field game played by a representative agent.
	\end{remark}
	In this setting, the principal compensates the representative agent with a lump sum payment $\xi$ at time $T$ and a continuous payment stream $\chi : [0, T]\times \Omega \to \RR $. We assume that $\chi$ is an $\FF$-adapted process and $\xi$ is an $\mathcal{F}_T$-measurable random variable. With a slight abuse of notation, we denote by $\mathcal{F}_T$ the set of $\mathcal{F}_T$-measurable random variables and by $\FF$ the set of $\FF$-adapted processes.
    \begin{remark}
    Note that for all $\xi \in \mathcal{F}_T$, there exists a Borel-measurable function $\tilde{F}:[0,1]\times \mathcal{C}\to \RR$ such that $\xi = \tilde{F}(U,X)$, $\PP_0-a.s.$. Therefore, for almost every $u \in [0,1]$, the random variable $\tilde{F}(u,X)$ represents the contract assigned to type $u$. The same analysis applies to the continuous incentive $\chi$ and the effort process $\alpha$.
    \end{remark}
    Next, we introduce the following mappings describing the agents' characteristics:
    \begin{itemize}
        \item Cost of effort: $c:[0,T]\times [0,1]\times \mathcal{C}\times \mathcal{P}^2([0,1]\times \mathcal{C})\times \RR \to \RR$.
        \item Discounting process: $k:[0,T]\times [0,1]\times \mathcal{C}\times \mathcal{P}^2([0,1]\times \mathcal{C})\times \RR\to \RR$.
        \item Terminal utility: $U_a:\RR \to \RR$.
        \item Continuous utility: $u_a :[0,T]\times[0,1]\times \mathcal{C}\times \mathcal{P}^2\left([0,1]\times \mathcal{C}\right)\times \RR\to \RR$.
    \end{itemize}
    These mappings satisfy the following assumption.
    \begin{ass}\label{ass3}
The mappings $c$, $k$, $U_a$, and $u_a$ satisfy the following properties:
\begin{itemize}
\item[(a)] $c$, $k$, $U_a$, and $u_a$ are measurable.
\item[(b)] The mappings $(t,u,x)\to c(t,u,x,\mu,a)$,  $(t,u,x) \to k(t,u,x,\mu,a)$, and $(t,u,x) \to u_a(t,u,x,\mu,\chi)$ are $\FF$-progressively measurable for all $(\mu,a,\chi) \in  \mathcal{P}^2([0,1]\times \mathcal{C})\times\RR^2$.
\item[(c)] There exists a constant $K>0$ such that
\begin{equation*}\label{cond.utility}
|u_a(t,u,x,\mu,\chi)|\leq K\left(1+\|x\|_{\infty}+\int_{[0,1]\times \mathcal{C}}\|x\|_{\infty}\mu(du,dx)+|\chi|\right).
\end{equation*}
for all $(t,u,x,\mu,\chi)\in [0,T]\times [0,1]\times \mathcal{C}\times \mathcal{P}^2([0,1]\times \mathcal{C})\times \RR$.
\item[(d)] $U_a$ is invertible.
\item[(e)]  There exists a constant $K>0$ such that
\begin{align*}
   \left|c(t,u,x,\mu,a)\right| \leq K\left(1+|a|^2+\|x\|_{\infty}+\int_{[0,1]\times\mathcal{C} }\|x\|_{\infty}\mu(du,dx)\right),
\end{align*}
 for all $(t,u,x,\mu,a) \in [0,T]\times[0,1]\times \mathcal{C}\times \mathcal{P}^2([0,1]\times\mathcal{C})\times \RR$.
\item[(f)] $k$ is bounded.
\end{itemize}
\end{ass}
    Given a contract $(\chi,\xi) \in  \FF\times \mathcal{F}_T$ satisfying the integrability requirements in Definition \ref{def.admissible.contracts} below, the representative agent maximizes the following objective over $0 \leq t \leq T$.
	\begin{align*}   J^a_t(\alpha,\mu,\chi,\xi) &:= \EE^{\PP^{\alpha,\mu}}\left[\mathcal{K}_{t,T}^{\alpha}U_a\left(\xi\right)+\int_t^T\mathcal{K}_{t,s}^{\alpha}\big(u_a(s,U,X,\mu,\chi_s)-c(s,U,X,\mu,\alpha_s)\big)ds\bigg|\mathcal{F}_t\right],
	\end{align*}
	where
	\begin{equation*}   \mathcal{K}_{t,s}^{\alpha} := \exp\left(\int_{t}^s k_r\big(U,X,\mu,\alpha_r\big)dr\right), \quad t\leq s\leq T.
	\end{equation*}
	Furthermore, for a contract $(\chi,\xi)\in \FF\times \mathcal{F}_T$ and a population distribution $\mu\in \mathcal{P}^2([0,1]\times \mathcal{C})$, we introduce the agent's continuation value:
	$$	V^{a}_t(\mu,\chi,\xi):= \esssup_{\alpha \in \mathcal{A}}J_t^a (\alpha,\mu,\chi,\xi).
	$$

	\begin{definition}
		Let $(\chi,\xi) \in \FF\times \mathcal{F}_T$ be a contract. We say that a pair  $(\alpha,\mu)\in \mathcal{A}\times \mathcal{P}^2([0,1]\times \mathcal{C})$ is a heterogeneous mean-field equilibrium given $(\chi,{\xi})\in \FF\times \mathcal{F}_T$ if
		\begin{align*}   &J_0^a(\alpha,\mu,\chi,\xi) = V^a_0(\mu,\chi,\xi),\quad \lambda-a.s. \\ &\PP^{\alpha,\mu}\circ (U,X)^{-1} = {\mu}.
		\end{align*}
		Moreover, we denote by $\mathcal{M}^*({\chi},\xi)$ the set of heterogeneous mean-field equilibria given the contract $(\chi,\xi)$.
	\end{definition}
       In this work, we consider contracts satisfying the following additional properties.
	\begin{definition}\label{def.admissible.contracts}
		We denote by $\Sigma$ the set of contracts $(\chi,\xi)\in \FF\times \mathcal{F}_T$ such that
        \begin{itemize}
            \item[a)] $\mathcal{M}^*(\chi,\xi)\neq \emptyset.$
            \item[b)] $\EE^{\PP_0}\left[\exp\left(p|U_a(\xi)|+p|\xi|+p\int_0^T|\chi_t|dt \right) \right]<\infty,\quad$ for all $p \geq 0$.
        \end{itemize}
	\end{definition}
		Additionally, we assume that the contracts satisfy the agent's reservation utility constraint:
    \begin{align}\label{reservation}
		J^a_0(\alpha,\mu,\chi,\xi) \geq \tilde{R}_a, \quad  \lambda -a.s., \quad \forall (\alpha,\mu) \in \mathcal{M}^*(\chi,\xi),
	\end{align}
	where $\tilde{R}_a \in \mathcal{F}_0$ represents the reservation utility of the representative agent. We denote by $\Sigma_a\subset \Sigma$ the set of contracts satisfying \eqref{reservation}.
    \begin{remark}
        Note that $\tilde{R}_a\in \mathcal{F}_0$ implies the existence of a measurable mapping ${R}_a : [0,1]\times \RR\to \RR$ such that $\tilde{R}_a = {R}_a(U,\Psi)$, $\lambda-a.s.$
    \end{remark}

  The principal, in turn, maximizes the following objective:
  \begin{equation}\label{principal.obj}
		J_p(\chi,\xi) := \sup_{({\alpha},\mu)\in \mathcal{M}^*({\chi},\xi)}\EE^{\PP^{\alpha,\mu}}\left[U_p\bigg(X_T-\xi-\int_{0}^T\chi_s ds\bigg)\right],\quad
		V_p := \sup_{(\chi,\xi) \in {\Sigma_a}}J_p(\chi,\xi),
	\end{equation}
	where  $U_p: \RR \to \RR$ denotes the principal's utility function. We assume that $U_p$ is measurable and has polynomial growth. The additional integrability condition on $\xi$ in Definition \ref{def.admissible.contracts} ensures that the principal's objective is well defined.
	\begin{remark}
		Our formulation is well suited to studying contracting problems with an infinite population of heterogeneously interacting agents. Interactions can enter through the drift of $X$, the objectives of the principal and agent, or the reservation utilities.
	\end{remark}
    \subsection{Connection to mean-field control}

The principal-agent problem described in equation \eqref{principal.obj} has the same structure as in Élie et al. \cite{manymanyagents}, except that we use an extended state space. Thus, we follow the same approach to reduce the principal's problem to a mean-field control problem. The main idea is to conveniently parameterize the set of admissible contracts $\Sigma_a$ using backward stochastic differential equations (BSDEs). This parameterization allows us to characterize, given a contract $(\chi,\xi)\in \Sigma_a$, the optimal effort of the representative agent and the corresponding continuation utility. This, in turn, allows us to address the principal's problem using a dynamic programming approach.

 First, we define the agent's Hamiltonian as follows.
	\begin{align*}   h_t(u,x,y,{\mu},z,\chi,a) &:= z b_t(u,x,\mu,a)+u_a(t,u,x,\mu,\chi)-c_t(u,x,\mu,a)+k_t(u,x,\mu,a) y, \\   H_t(u,x,y,{\mu},z,\chi) &:= \sup_{a \in \RR} h_t(u,x,y,\mu,z,\chi,a),
	\end{align*}
where $h: [0,T]\times [0,1]\times\mathcal{C}\times \RR \times\mathcal{P}^2\left([0,1]\times \mathcal{C}\right)\times \RR^3\to \RR $.

\begin{ass}\label{ass.max.hamiltonian}
    There exists a unique measurable mapping $\hat{\alpha} :[0,T]\times [0,1]\times \mathcal{C}\times \RR\times \mathcal{P}^2([0,1]\times \mathcal{C})\times \RR^2\to \RR$ with linear growth satisfying
    \begin{align*}
&h_t\left(u,x,y,\mu,z,\chi,\hat{\alpha}_t(u,x,y,\mu,z,\chi)\right) = H_t(u,x,y,\mu,z,\chi),
    \end{align*}
    for all $(t,u,x,y,\mu,z,\chi)\in[0,T]\times[0,1]\times \mathcal{C} \times \RR \times \mathcal{P}^2([0,1]\times \mathcal{C}) \times \RR^2$.
\end{ass}
The following lemma characterizes heterogeneous mean-field equilibria in terms of solutions to a BSDE. It is the labeled-state analogue of the BSDE characterization of Proposition $3.1$ in \cite{manymanyagents}.
	\begin{lemma}\label{BSDE.Lemma}%
		Let $(\chi,\xi) \in \Sigma$. We define the following BSDE
		\begin{align}\label{BSDE.2}
			Y_t &= U_a(\xi) + \int_t^T H_s(U,X,Y_s,\mu,Z_s,\chi_s)ds-\int_t^TZ_s\sigma_s(U,X)dW_s,\quad \PP_0-a.s., \quad t\in [0,T],\\
			\mu &= \PP^{\hat{\alpha}(\cdot),\mu}\circ (U,X)^{-1}. \nonumber
        \end{align}
Suppose that Assumptions \ref{ass1}, \ref{ass3}, and \ref{ass.max.hamiltonian} hold.
\begin{itemize}
\item If equation \eqref{BSDE.2} admits a solution $(Y,Z,\mu)$ with $Z\in \mathbb{H}_{\text{exp}}([0,T])$, $Y\in \mathbb{D}_{\text{exp}}([0,T])$, and $\mu\in \mathcal{P}^2\left([0,1]\times \mathcal{C}\right)$, we have
$(\hat{\alpha}(\cdot),\mu) \in \mathcal{M}^*(\chi,\xi)$, where $\hat{\alpha}$ is defined in Assumption \ref{ass.max.hamiltonian}.
\item Let $(\alpha^*,\mu)\in \mathcal{M}^*(\chi,\xi)$. Then, equation \eqref{BSDE.2} admits a solution $(Y,Z,\mu)$ with $Z\in \mathbb{H}_{\text{exp}}([0,T])$, $Y\in \mathbb{D}_{\text{exp}}([0,T])$, and $\mu\in \mathcal{P}^2\left([0,1]\times\mathcal{C}\right)$ such that $\alpha^*_t = \hat{\alpha}_t(U,X,Y_t,\mu,Z_t,\chi_t)$ $dt\otimes \PP_0-a.s.$
\end{itemize}
\end{lemma}
The proof follows from Proposition $3.1$ in \cite{manymanyagents} after enlarging the state from the output path to the type-output pair. Since \(U\) is \(\mathcal{F}_0\)-measurable and constant in time, it enters the coefficients only as a parameter and does not change the Girsanov argument or the martingale representation. 
    \begin{cor} \label{theorem.mean.field.control}
   Suppose that Assumptions \ref{ass1}, \ref{ass3}, and \ref{ass.max.hamiltonian} hold. Then, the principal's problem can be expressed as the following mean-field control problem.
		\begin{align*}
			&V_p = \sup_{Y_0\geq \tilde{R}_a}\sup_{\chi,Z \in  \mathbb{H}_{\text{exp}}([0,T])}\hat{J}_p(Y_0,Z,\chi), \\
			&\hat{J}_p(Y_0,Z,\chi):=\sup_{(\hat{\alpha}(\cdot),\mu)\in\mathcal{M}^*\left(\chi,U_a^{-1}\big(Y_T^{Y_0,Z,\chi}\big)\right)} \EE^{\PP^{ \hat{\alpha}(\cdot),\mu}}\left[U_p\bigg(X_T-U_a^{-1}\big(Y_T^{Y_0,Z,\chi}\big)-\int_0^T \chi_s ds\bigg) \right], \\
			X_t &=\Psi +\int_0^t b_s\big(U,X,\mu,\hat{\alpha}_s(U,X,Y_s,\mu,Z_s,\chi_s)\big)ds + \int_0^t\sigma_s(U,X)dW^{\hat{\alpha}(\cdot),\mu}_s,\quad \mathbb{P}^{\hat{\alpha}(\cdot),\mu}-a.s.\\
			Y_t &= Y_0 +\int_0^t\Big(b_s\big(U,X,\mu,\hat{\alpha}_s(U,X,Y_s,\mu,Z_s,\chi_s)\big)Z_s-H_s(U,X,Y_s,\mu,Z_s,\chi_s)\Big)ds\\
			&\qquad +\int_0^tZ_s\sigma_s(U,X)dW^{\hat{\alpha}(\cdot),\mu}_s, \\
            \mu &= \PP^{\hat{\alpha}(\cdot),\mu} \circ (U,X)^{-1},
		\end{align*}
        where $\hat{\alpha}:[0,T]\times [0,1]\times \mathcal{C}\times \RR\times \mathcal{P}^2([0,1]\times \mathcal{C})\times \RR^2\to \RR$ is defined in Assumption \ref{ass.max.hamiltonian}.
    \end{cor}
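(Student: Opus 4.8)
The plan is to use the preceding lemma as a change of variables that replaces the contract $(\chi,\xi)$ by the triple $(Y_0,Z,\chi)$, and then to rewrite the reservation constraint \eqref{reservation} and the principal's reward \eqref{principal.obj} in these new coordinates. The lemma already supplies both directions of this correspondence: starting from a contract $(\chi,\xi)\in\Sigma$, its second bullet produces a mean-field equilibrium $(\alpha^*,\mu)\in\mathcal{M}^*(\chi,\xi)$ together with a solution $(Y,Z)\in\mathbb{H}_{\text{exp}}([0,T])\times\mathbb{D}_{\text{exp}}([0,T])$ of \eqref{BSDE.2} for which $\alpha^*_t=\hat\alpha_t(U,X,Y_t,\mu,Z_t,\chi_t)$; conversely, its first bullet says that any solution of \eqref{BSDE.2} with the same integrability induces an equilibrium with effort $\hat\alpha$.

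First I would make the reparameterization explicit. Given $(Y_0,Z,\chi)$ with $Z\in\mathbb{H}_{\text{exp}}([0,T])$ and $\chi$ satisfying the integrability in the definition of $\Sigma$, I would solve the forward dynamics for $Y$ displayed in the statement (obtained by writing \eqref{BSDE.2} forward and passing to the measure $\PP^{\hat\alpha(\cdot),\mu}$ via Girsanov, using $\hat\alpha$ from Assumption \ref{ass.max.hamiltonian}), jointly with the fixed point $\mu=\PP^{\hat\alpha(\cdot),\mu}\circ(U,X)^{-1}$. I would then define the terminal payment by $\xi:=U_a^{-1}\big(Y_T^{Y_0,Z,\chi}\big)$, which is well defined because Assumption \ref{ass3}(c)--(d) makes $U_a$ strictly increasing and concave, so that $U_a(\xi)=Y_T$ reproduces the terminal condition of \eqref{BSDE.2}. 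By the first bullet of the lemma this $(\chi,\xi)$ lies in $\Sigma$ and admits $(\hat\alpha(\cdot),\mu)$ as an equilibrium; conversely, by the second bullet every contract in $\Sigma$ is recovered from the triple $(Y_0,Z,\chi)$ that the lemma associates to it. This establishes a bijection between $\Sigma$ and the set of admissible triples.

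Next I would translate the two remaining ingredients. Since $Y_0$ equals the agent's continuation value $V^a_0(\mu,\chi,\xi)$ at the equilibrium, the reservation constraint \eqref{reservation} $V^a_0\geq\tilde R_a$ becomes exactly $Y_0\geq\tilde R_a$, $\lambda$-a.s., so that optimizing over $\Sigma_a$ corresponds to imposing $Y_0\geq\tilde R_a$ in the new variables. Substituting $\xi=U_a^{-1}(Y_T^{Y_0,Z,\chi})$ into the principal's reward $U_p\big(X_T-\xi-\int_0^T\chi_s\,ds\big)$ turns it into the integrand of $J_p(Y_0,Z,\chi)$, and the $X$-dynamics in the statement are just the output SDE under $\PP^{\hat\alpha(\cdot),\mu}$ with $\alpha$ replaced by $\hat\alpha$. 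Taking suprema over the matched domains yields $V_P=\sup_{Y_0\geq\tilde R_a}\sup_{\chi,Z}J_p(Y_0,Z,\chi)$.

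The step I expect to be the main obstacle is matching the admissibility classes, i.e.\ checking that condition (b) in the definition of $\Sigma$ is equivalent to requiring $Z\in\mathbb{H}_{\text{exp}}([0,T])$ together with the exponential integrability of $\int_0^T|\chi_s|\,ds$. In one direction this needs a priori $\mathbb{D}_{\text{exp}}$-estimates on the forward process $Y^{Y_0,Z,\chi}$, obtained from the growth of $H$ (inherited from the bounds in Assumptions \ref{ass1} and \ref{ass3} and the linear growth of $\hat\alpha$) via standard exponential moment bounds for SDEs, so that $U_a(\xi)=Y_T$ has all exponential moments; in the other direction it uses that the martingale part of \eqref{BSDE.2} has driver $Z\sigma$ with $\sigma$ bounded and bounded inverse. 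Since our setting differs from \'Elie et al.\ \cite{manymanyagents} only through the inert label variable $U$, which enters $b,\sigma,c,k,u_a$ as a frozen parameter and contributes no new estimate, these bounds carry over and close the argument.
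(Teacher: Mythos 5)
Your proposal is correct and follows essentially the same route as the paper, which gives no standalone proof but defers to Corollary~3.1 of \'Elie et al.\ \cite{manymanyagents}: the reparameterization of contracts via the BSDE characterization, the substitution $\xi=U_a^{-1}(Y_T^{Y_0,Z,\chi})$, the identification of $Y_0$ with the agent's continuation value turning \eqref{reservation} into $Y_0\geq\tilde R_a$, and the observation that the label $U$ is an inert parameter are exactly the intended argument. Your write-up in fact supplies more detail (notably on matching the admissibility classes) than the paper records.
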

    This corollary adapts Corollary $3.1$ of Élie et al. \cite{manymanyagents}.
    \subsection{Verification lemma}
We continue by adapting the approach of Élie et al. \cite{manymanyagents}. We tackle the mean-field control problem introduced in Corollary \ref{theorem.mean.field.control} using the dynamic programming method proposed by Pham and Wei \cite{bellmanphamwei}. Throughout the verification argument, we identify \([0,1]\times\mathbb R^2\) with a subset of \(\mathbb R^3\). The value function is defined on \(\mathcal P^2(\mathbb R^3)\), and Lions derivatives are understood in the standard sense on \(\mathcal P^2(\mathbb R^3)\). The HJB equation and the verification result are evaluated only on measures supported in \([0,1]\times\mathbb R^2\) whose first marginal is uniform in \([0,1]\).

We first recall the definition of the lifted derivative of a functional in $\mathcal{P}^2( \RR^3)$, introduced by Lions \cite{CollegedeFranceLions}. For $\phi  : \mathcal{P}^2(\RR^3)\to \RR$, define the lifted version of $\phi$ as the mapping $\tilde{\phi} : L^2(\Omega,\mathcal{F}_T,\RR^3)\to \RR$ satisfying
$$\tilde{\phi}\left(\eta\right)=\phi\left(\mathcal{L}\big(\eta\big)\right), \quad \eta \in L^2(\Omega,\mathcal{F}_T,\RR^3).$$
We say that $\phi$ is differentiable in $\mathcal{L}(\eta)$ if $\tilde{\phi}$ is Fréchet differentiable at $\eta$. In this case, we denote by $D\tilde{\phi}_{\eta}:L^2(\Omega,\mathcal{F}_T,\RR^3)\to \RR$ the Fréchet derivative of $\tilde{\phi}$ at $\eta$. By Riesz's Theorem, $D\tilde{\phi}_\eta$ can be represented as
$\Big(\partial_{\nu}\phi\big(\mathcal{L}(\eta)\big)\Big)\big(\eta\big)$ for a suitable function $\partial_{\nu}\phi(\mathcal{L}(\eta)):\RR^3 \to \RR^3$. Thus, we define the lifted derivative of $\phi$ at $\nu = \mathcal{L}(\eta)$ as $\partial_{\nu}\phi(\mathcal{L}(\eta))$. Furthermore, we say that $\phi\in C^2\big(\mathcal{P}^2(\RR^3)\big)$ if
\begin{itemize}
    \item $\phi$ is differentiable at $\nu$ for all $\nu \in \mathcal{P}^2(\RR^3) $.
    \item For all $\nu \in \mathcal{P}^2(\RR^3)$, the function $\partial_{\nu}\phi(\nu)$ is differentiable.
\end{itemize}
We refer the reader to the monograph {\cite{Carmonabook}} for a detailed exposition of the theory.

Next, we fix the stochastic basis $(\Omega,\mathcal{F}_T,\PP)$ and define a $\PP$-Brownian motion $B$ and a random vector $(U,X_0)\sim \lambda$, independent of $B$. We restrict ourselves to the Markovian case: we assume that there exist measurable functions $\hat{b}, \hat{c}:[0,T]\times[0,1]\times \RR \times \RR\times \mathcal{P}([0,1]\times \mathbb{R})\times \RR^2 \to \RR$, $\hat{u}_a:[0,T]\times[0,1]\times \RR \times \mathcal{P}([0,1]\times \mathbb{R})\times \RR \to \RR$, $\hat{\sigma}: [0,T]\times [0,1]\times \RR \to \RR$, and $\hat{\mu} :[0,T] \to \mathcal{P}([0,1]\times \mathbb{R})$ such that
\begin{align*}
    \mu \circ (U,X_t)^{-1} &= \hat{\mu}_t ,\\
    {b}_t\left(U,X,\mu,\hat{\alpha}_t(U,X,Y_t,\mu,Z_t,\chi_t)\right) &=\hat{b}(t,U,X_t,Y_t,\hat{\mu}_t,Z_t,\chi_t),\\
   {c}_t\left(U,X,\mu,\hat{\alpha}_t(U,X,Y_t,\mu,Z_t,\chi_t)\right)&=\hat{c}(t,U,X_t,Y_t,\hat{\mu}_t,Z_t,\chi_t),\\
    {u}_a(t,U,X,\mu,\chi_t) &=\hat{u}_a(t,U,X_t,\hat{\mu}_t,\chi_t), \\
    \sigma_t(U,X) &= \hat{\sigma}(t,U,X_t).
\end{align*}
We introduce the principal's dynamic value function $v_P:[0,T]\times \mathcal{P}^2(\RR^3
 )\to \RR$ defined by
\begin{align*}
    v_P(t,\nu) &:= \sup_{Z,\chi}\EE^{\PP}
    \left[U_p\left(X_T-U_a^{-1}(Y_T)-\int_t^T{\chi}_sds\right)\right],
    \end{align*}
For all $0 \leq t\leq s\leq T$, the triple $(X_s,Y_s,\mu_s^{t,\nu})$ solves the McKean--Vlasov equation below.
\begin{align}
		X_s &=X_t +\int_t^s \hat{b}_r\Big(U,X_r,Y_r,\mu^{t,\nu}_r,Z_r,\chi_r\Big)dr + \int_t^s\hat{\sigma}_r(U,X_r)dB_r,\quad \mathbb{P}-a.s., \nonumber\\
			Y_s &= Y_t +\int_t^s\Big(\hat{b}_r(U,X_r,Y_r,\mu^{t,\nu}_r,Z_r,\chi_r)Z_r-H_r(U,X_r,Y_r,\mu^{t,\nu}_r,Z_r,\chi_r)\Big)dr+\int_t^sZ_r\hat{\sigma}_r(U,X_r)dB_r, \nonumber \\
              \mu_s^{t,\nu}&:= \PP \circ (U,X_s,Y_s)^{-1},\nonumber \\
            \nu &:= \PP \circ (U,X_t,Y_t)^{-1}. \label{mckean.1}
\end{align}

In addition, for all $(z,\chi,\nu) \in \RR^2\times \mathcal{P}^2\left( \RR^3\right)$, $\phi \in {C}^2\left(\mathcal{P}^2(\RR^3) \right)$, and $(u,x,y) \in [0,1]\times \RR^2$ define
\begin{align*}
\mathcal{L}^{e,z}\phi(\nu)(u,x,y)&:= \partial_{\nu}\phi(\nu)(u,x,y)\cdot B(t,u,x,y,\nu,z,e) \\
&\quad + \frac{1}{2}\text{Tr}\left(\partial_{(u,x,y)}\partial_{\nu}\phi(\nu)(u,x,y) SS^{\top}(t,u,x,z)\right)-e,
\end{align*}
where
\begin{align*}
    B(t,u,x,y,\nu,z,e) &:=\left(0,\hat{b}\left(t,u,x,y,\nu,z,e\right),\hat{b}\left(t,u,x,y,\nu,z,e\right)z-H(t,u,x,y,\nu,z,e)\right) ,\\
    S(t,u,x,z) &:= \begin{pmatrix}
0 & 0 & 0  \\
0 & {\hat{\sigma}}(t,u,x) & 0 \\
0 & z{\hat{\sigma}}(t,u,x)& 0
\end{pmatrix}.
\end{align*}
Following Pham and Wei \cite{bellmanphamwei}, we introduce the Hamilton-Jacobi-Bellman (HJB) equation associated with the mean-field control problem above.

\begin{align}
   &\partial_{t} v(t,\nu) +\int_{[0,1]\times \RR^2}\left(\sup_{(e,z)\in \mathbb{R}^2}\mathcal{L}^{e,z}v(t,\nu)(u,x,y)\right)\nu(du,dx,dy)= 0,\label{hjb.eqn}\\
   &v(T,\nu) = \int_{[0,1]\times \RR^2}U_p\left(x-U_a^{-1}(y)\right)\nu(du,dx,dy). \nonumber
\end{align}
We conclude this section with a verification lemma adapted from Theorem 4.2 in Élie et al. \cite{manymanyagents}. The result allows us to construct optimal contracts from classical solutions of the HJB equation \eqref{hjb.eqn}.
\begin{lemma}[Verification]\label{thm.verification}
	Let Assumptions \ref{ass1}, \ref{ass3}, and
	\ref{ass.max.hamiltonian} hold. Suppose that
	\(w\in C^{1,2}([0,T]\times\mathcal P^2(\mathbb R^3))\) solves
	\eqref{hjb.eqn} and satisfies, for some \(C>0\),
	\[
	|w(t,\mu)|\le C(1+\|\mu\|_2^2),\qquad
	|\partial_\mu w(t,\mu)(u,x,y)|
	\le C(1+|(u,x,y)|+\|\mu\|_2).
	\]
	For each measurable function \(\eta: [0,1]\times \RR \to \RR\), set
	$
	\mu_0^\eta:=\lambda(du,dx)\delta_{\eta(u,x)}(dy)$,
	and assume that
	$$
	\eta^*\in\argmax_{\eta:\,\eta(u,x)\ge R_a(u,x)} w(0,\mu_0^\eta).
	$$
	Assume further that the Hamiltonian in \eqref{hjb.eqn} admits a measurable
	maximizer \((\chi^*,z^*)\), and that the closed-loop McKean--Vlasov system
	\[
	Y_0^*=\eta^*(U,X_0),\qquad
	\mu_t^*=\mathcal L(U,X_t^*,Y_t^*),\qquad
	Z_t^*=z^*(t,U,X_t^*,Y_t^*,\mu_t^*), \qquad 	\chi_t^*=\chi^*(t,U,X_t^*,Y_t^*,\mu_t^*),
	\]
	admits a solution \((U,X^*,Y^*,\mu^*)\). Define
	\[
	\alpha_t^*
	=
	\hat\alpha(t,U,X_t^*,Y_t^*,\mu_t^{*},Z_t^*,\chi^*_t),
	\qquad
	\xi^*=U_a^{-1}(Y_T^*),
	\]
	and assume the admissibility condition
	\[
	Z^*\in\mathbb H_{\exp}([0,T]),
	\qquad
	\xi^*\in \mathbb{L}_{\exp}(\mathcal F_T).
	\]
	Then $(\chi^*,\xi^*)\in \Sigma_a$ is optimal, and
	\[
	(\alpha^*,\mu^{*})\in\mathcal M^*(\chi^*,\xi^*),
	\qquad
	V_p=w(0,\mu_0^{\eta^*}).
	\]
\end{lemma}
\begin{remark}\label{rem.hjb.alternative}
The HJB equation is used here as a verification device, rather than as the only possible construction method. In examples where the associated mean-field control problem can be solved by Pontryagin-type arguments (see \cite{Carmonabook}), the resulting optimizer can be inserted into the parametrization \eqref{mckean.1} to construct the contract.
\end{remark}
    \section{Main results}\label{section.last}
 We introduce a contracting model with a principal and multiple non-cooperative agents who manage heterogeneously interdependent projects. The model captures how incentives should vary across agents with different levels of influence. We present two versions of the model: the $N$-agent version and the heterogeneous mean-field version, using the general method developed in Section~\ref{section.technicalsetup}. In each case, we compute the optimal contracts and the agents' optimal efforts; see Theorem \ref{opt.contract.finite.example1} and Theorem \ref{opt.contract.interaction.example.1}. We also establish that the $N$-agent problem converges to the heterogeneous mean-field problem as $N\rightarrow \infty$ in the following sense:
 \begin{itemize}
    \item[$i)$] The agents' optimal efforts converge as $N \rightarrow \infty$; see Theorem \ref{lemmaQ}.
    \item[$ii)$] The optimal contracts and the principal's value converge as $N\rightarrow \infty$; see Theorem \ref{conv.Value}.
    \item[$iii)$] The optimal contract for a continuum of agents generates a near-optimal contract for the $N$-agent problem when $N$ is large enough. By near-optimal, we mean that the contract satisfies the agents' reservation constraints, induces a unique Nash equilibrium effort in the $N$-agent problem, and is $\mathcal{O}(1/N)$-optimal from the principal's viewpoint; see Theorem \ref{nearoptimal}.
\end{itemize}
We further show that optimal contracts incentivize more influential agents to exert greater effort (Theorem \ref{Lemma.qualitative}). We also prove that optimal compensations are stable under perturbations of the agents' interactions (Theorem \ref{lemma.Q.stability}, Theorem \ref{theorem.contracts.stability}).   

	 We model heterogeneity through an interaction function $G:[0,1]^2\to \RR$, representing the agents' interdependence, and $\lambda \in \mathcal{P}\left([0,1]\times \RR\right)$, denoting the joint distribution of type and initial output. We denote by $\hat{\lambda}(u,dx) \in \mathcal{P}( \RR)$ the disintegration of $X$ given $U=u$, which gives the distribution of the initial output of type $u\in [0,1]$. In this section, we make the following assumptions on $G$ and $\hat{\lambda}$.
	\begin{ass}\label{ass.G}
		The mappings $G:[0,1]^2\to \RR$ and $\hat{\lambda}:[0,1]\to \mathcal{P}(\RR)$ satisfy the following properties:
		\begin{itemize}
        \item[(a)] $\hat{\lambda}$ is measurable.
			\item[(b)]  $\sup_{u\in [0,1]} \int_{\RR}x^2\hat{\lambda}(u,dx)<\infty$.
            \item[(c)] There exist $C>0$, $m \in \mathbb{N}$, and a collection of intervals $(I_i)_{i=1}^m$ such that         $\cup_{i=1}^mI_i=[0,1],$ and
            \begin{align*}
            \big|G(u_1,v_1)-G(u_2,v_2)\big|&\leq C\left(|u_1-u_2|+|v_1-v_2| \right), \quad  (u_1,v_1), (u_2,v_2) \in I_i\times I_j, \quad i,j\in [m], \\
\mathcal{W}_2\left(\hat{\lambda}({u_1},dx),\hat{\lambda}({u_2},dx)\right)&\leq C|u_1-u_2|, \quad u_1, u_2 \in I_i, \quad i \in [m],
            \end{align*}
where $\mathcal{W}_2$ denotes the Wasserstein--$2$ distance on $\mathcal{P}^2(\RR)$.
		\end{itemize}
	\end{ass}
In addition, for all $G$ satisfying Assumption \ref{ass.G}, we define $\|G\|_{\infty}:= \sup_{u,v\in[0,1]}|G(u,v)|$. We denote by $\tilde{R}_a \in \mathcal{F}_0$ the reservation utility of the representative agent, which satisfies the following property.
\begin{ass}\label{ass.reservation}
There exists a Lipschitz continuous function ${R}_a : [0,1]\to \RR$ such that $\tilde{R}_a = {R}_a(U)$, $\lambda-a.s.$
\end{ass}

\subsection{Contracting $N$ heterogeneous agents}\label{example1.finite.agents}
Let $N \in \mathbb{N}$ be the number of agents, and let $[N]:=\{1,\ldots,N\}$. We fix $\Omega^N:= C([0,T],\RR^N)$ endowed with the Wiener measure $\WW$, and denote by $\bar{B}^N:=(B^{1,N},\ldots,B^{N,N})$ the canonical Brownian process. Given deterministic initial project values $\bar{x}_0^N:=(x_0^{1,N},\ldots,x_0^{N,N})\in\RR^N$, we set $X^{i,N}_t=x_0^{i,N}+B^{i,N}_t$ under $\WW$. We define $G^N_{i,j}:= G(\frac{i}{N},\frac{j}{N})$, $i,j \in [N]$, where $G$ satisfies Assumption \ref{ass.G}, and introduce $\alpha:= (\alpha^1,\ldots,\alpha^N)$ for the agents' effort processes. Under effort $\alpha$, the agents control the output process $\bar{X}^N :=(X^{1,N},\ldots,X^{N,N})$ as follows:
\begin{align}\label{state.finite.1}
		X^{i,N}_t =x_0^{i,N}+\int_0^t\left(\frac{1}{N}\sum_{j= 1}^N G^N_{i,j}X^{j,N}_s+  \alpha^i_s \right)ds +W^{\alpha,i}_t, \quad \PP^\alpha-a.s.,\quad i \in [N],\end{align}
	where $W^\alpha:=(W^{\alpha,i})_{i=1}^N$ is an $N$-dimensional standard $\PP^{\alpha}$-Brownian motion. Here $\PP^\alpha$ is defined by the following Radon-Nikodym derivative:
    \begin{align*}
    &\frac{d\PP^{\alpha}}{d\WW}:=\exp\left(-\frac{1}{2}\sum_{i=1}^N\int_0^T\left(\frac{1}{N}\sum_{j= 1}^{N} G^N_{i,j}X^{j,N}_s+  \alpha^i_s \right)^2ds+\sum_{i=1}^N\int_0^T \left(\frac{1}{N}\sum_{j= 1}^N G^N_{i,j}X^{j,N}_s+  \alpha^i_s \right)dB^{i,N}_s \right),
    \end{align*}
    We denote by $\mathcal{A}$ the set of $\RR^N$-valued, $\FF^{\bar{B}^N}$-adapted processes for which there exists $\epsilon>0$ such that $$\EE^{\WW}\left[\left(\frac{d\PP^{\alpha}}{d\WW}\right)^{1+\epsilon} \right] <\infty,$$
    and, for all $p\geq 0$ $$\EE^{\WW}\left[\exp\left(p\int_0^T|\alpha_s|^2ds\right) \right]<\infty.$$
\begin{remark}\label{rem.random.initial.finite}
We state the finite-agent model with deterministic initial conditions to simplify notation. The same arguments apply to independent square-integrable initial conditions with laws $\hat\lambda(i/N,dx)$, which is the formulation used in the convergence estimates.
\end{remark}
    Next, we introduce the set of contracts $\Sigma^N$ defined as the set of $\mathcal{F}_T^{\bar{B}^N}$-measurable random vectors
    $\bar{\xi}^N:= (\xi^{1,N},\ldots,\xi^{N,N})$  satisfying
    \begin{align}
        &\sup_{i \in [N]}\EE^{\WW}\left[\exp\left(p|\xi^{i,N}|\right)\right]<\infty, \quad \forall p\geq0.
    \end{align}
    Then, given the contract vector $\bar{\xi}^N\in \Sigma^N$, the agents maximize the expected value of their contracts minus their cost of effort.
   \begin{equation*}	J^{i}_a(\alpha^i,\xi^{i,N}) := \EE^{\PP^{\alpha}} \left[\xi^{i,N} -\frac{1}{2}\int_0^T(\alpha^i_s)^2ds \right], \quad i \in [N].
\end{equation*}
We assume that agents are non-cooperative, so their efforts are chosen in Nash equilibrium. We denote by $\mathcal{A}^*_N(\bar{\xi}^N)$ the set of agents' Nash equilibria given the contracts $\bar{\xi}^N$. We assign to every agent $i \in [N]$ the reservation utility $R_a(i/N)$, where $R_a:[0,1]\to \RR$ satisfies Assumption \ref{ass.reservation}. We also denote by $\Sigma_a^N$ the set of contracts that satisfy the reservation utility constraint, i.e., $\bar{\xi}^N \in \Sigma_a^N$ if and only if $\mathcal{A}_N^*(\bar{\xi}^N)\neq\emptyset$ and, for every $\alpha:=(\alpha^1,\ldots,\alpha^N)\in \mathcal{A}_N^*(\bar{\xi}^N)$,
\begin{align*}
    J_a^i(\alpha^i,\xi^{i,N}) \geq R_a(i/N), \quad i \in [N].
\end{align*}

	The principal, in turn, maximizes the expected average terminal project value net of compensation.
	\begin{equation}
		J^N_p(\bar{\xi}^N) := \sup_{\alpha \in \mathcal{A}^*_N(\bar{\xi}^N)}\EE^{\PP^\alpha}\left[\frac{1}{N}\sum_{i=1}^NX^{i,N}_T-\frac{1}{N}\sum_{i=1}^N\xi^{i,N} \right], \quad V^N_p:=\sup_{\bar{\xi}^N\in \Sigma_a^N}J^N_p(\bar{\xi}^N)
	\end{equation}
	We apply Élie and Possamaï {\cite{competitive}} to reduce the principal's problem to a standard stochastic control problem. For every agent $i \in [N]$, we introduce the maps $h^i: [0,T]\times \RR^N \times \RR^{N\times N}\times \RR^N \to \RR$ and $H^i: [0,T]\times\RR^N\times \RR^{N\times N}  \to \RR$ defined as follows:
	\begin{align*}
		h^i(t,{x},{z},{a}) &:= \sum_{j=1}^Nz_{i,j}\left(\frac{1}{N}\sum_{k= 1}^NG^N_{j,k}x_k+a_j\right)-\frac{1}{2}(a_i)^2, \\
		H^i(t,{x},{z}) &:= h^i(t,x,z,a^*)= \frac{1}{2} z_{i,i}^2+\sum_{j\neq i}^Nz_{i,j}z_{j,j}+\sum_{j=1}^Nz_{i,j}\frac{1}{N}\sum_{k=1}^NG^N_{j,k}x_k,
	\end{align*}
	where $a^*:=(z^{11},\ldots,z^{NN})$ is the unique solution of the fixed-point equation
    \begin{equation*}(a^1,\ldots,a^N)=\left(\argmax h^1(t,x,z,a^{-1},\cdot),\ldots,\argmax h^N(t,x,z,a^{-N},\cdot)\right).
    \end{equation*}
    Indeed, the first-order conditions give $a^i=z_{i,i}$ for every $i$, so the fixed point is well defined and unique.
 Let $Z^N$ be an $\FF$-adapted,  $\RR^{N\times N}$-valued process. Given $(Y_0,Z)$, introduce the controlled process $Y^{Y_0,Z}:=(Y^{1,Y_0,Z},\ldots,Y^{N,Y_0,Z})$ defined by
	\begin{align*}
		dY^{i,Y_0,Z}_t &= -H^i_t(\bar{X}^N_t,{Z}_t)dt +{Z}^i_t d\bar{X}^N_t \\
		&= -\left(\frac{1}{2}(Z^{i,i}_t)^2+\sum_{j\neq i}Z^{i,j}_tZ^{j,j}_t+\frac{1}{N}\sum_{j= 1}^NZ_{t}^{i,j}\sum_{k=1}^NG^N_{j,k}X^{k,N}_t \right)dt+{Z}^i_t\cdot d\bar{X}^N_t, \quad i \in [N] \\
		&=\frac{1}{2}(Z^{i,i}_t)^2dt+{Z}^i_t\cdot d\bar{W}^{\bar{Z}}_t,\quad i \in [N],
	\end{align*}
and $Y_0\in \RR^N$. Here $\bar W^{\bar Z}$ denotes the Brownian motion under the probability measure induced by the equilibrium effort $\bar Z=(Z^{1,1},\ldots,Z^{N,N})$. We write ${Z}^i:=(Z^{i,1},\ldots,Z^{i,N})$ for all $i \in [N]$.

The first-order condition for agent $i$ gives $a^{i,*}=Z^{i,i}$, so the induced effort profile is pinned down only by the diagonal entries of $Z$. Under $\PP^{\bar Z}$,
\[
    dY^{i,Y_0,Z}_t
    =
    \frac12 (Z^{i,i}_t)^2dt
    +
    Z^i_t\cdot d\bar W^{\bar Z}_t .
\]
The off-diagonal entries $Z^{i,j}$, $j\neq i$, change only the risk of the terminal output without affecting its expectation or the optimal effort. Since the principal's payoff is linear, this additional risk has no effect. Consequently, for any admissible contract represented by $(Y_0,Z)$, replacing $Z$ by its diagonal part yields the same equilibrium effort and the same expected total payment. Therefore, the principal's control processes can be reduced without loss of optimality to the diagonal process $\bar Z^N=(Z^{1,1},\ldots,Z^{N,N})$.

Next, we fix the probability space $(\Omega,\mathbb{F},\mathbb{P})$, on which we define an $N$-dimensional standard $\PP$-Brownian motion $\tilde{W}:=(\tilde{W}^1,\ldots,\tilde{W}^N)$. Let $\bar{Z}:= (Z^{11},\ldots,Z^{NN})$ be an $\FF$-adapted, square-integrable process defined on $(\Omega,\mathbb{F},\PP)$. The principal solves the following linear-quadratic stochastic control problem:
	\begin{equation*}
		V^N_p:= \sup_{\bar{Z} }\EE\left[\frac{1}{N}\sum_{j=1}^N  X^{j,N}_T- \frac{1}{2 N} \sum_{j=1}^N \int_0^T(Z^{j,j}_s)^2 ds\right]-\frac{1}{N}\sum_{j=1}^N R_a(j/N),
	\end{equation*}
	where
	\begin{align*}
		dX^{i,N}_t &= \left( \frac{1}{N}\sum_{j=  1}^N G^N_{i,j}X^{j,N}_t + Z^{i,i}_t\right)dt +  d\tilde{W}^{i}_t, \quad i \in [N],  \\
		X_0^{i,N} &=x_0^{i,N}.
	\end{align*}
	We introduce the dynamic value function $v : [0,T]\times \RR^N \to \RR$ defined by
    \begin{align}\label{control.finite.agents}
        v(t,{x}) := \sup_{\bar{Z}}\mathbb{E}\left[\frac{1}{N}\sum_{j=1}^NX^{j,N}_T-\frac{1}{2N}\sum_{j=1}^N\int_t^T(Z^{j,j}_s)^2ds \bigg|\bar{X}_t^N={x}\right].
    \end{align}
The Hamilton-Jacobi-Bellman (HJB) equation associated with the control problem \eqref{control.finite.agents} is
	\begin{align*}
		&v_t + \sup_{\bar{z}\in \RR^N} \bigg\{ \sum_{i=1}^N\bigg(\Big(\frac{1}{N}\sum_{j=1  }^NG^N_{i,j}x_j+ z^i\Big)v_{x_i}+\frac{1}{2}v_{x_i x_i} -\frac{1}{2N}(z^i)^2\bigg)\bigg\} = 0, \\
		&v(T,{x}) = \frac{1}{N}\sum_{j=1}^N x_j.
	\end{align*}
	This HJB equation leads to the following parabolic PDE:
	\begin{align*}
		&v_t + \frac{1}{N}\sum_{i=1}^N\left(\sum_{j= 1 }^N G^N_{i,j}x_j\right)v_{x_i}+\frac{N}{2}\sum_{i=1}^Nv_{x_i}^2+\frac{1}{2}\sum_{i=1}^N  v_{x_i x_i} = 0, \\
		&v(T,{x}) = \frac{1}{N}\sum_{j=1}^N  x_j.
	\end{align*}
      It is straightforward to verify that the value function satisfies
	\begin{equation*}
		v(t,{x}) = \frac{1}{N}\sum_{i=1}^NQ^{i,N}(t)x_i+\frac{1}{2N}\sum_{i=1}^N \int_t^T\left(Q^{i,N}(s)\right)^2ds,
	\end{equation*}
	where $\bar{Q}^N:=(Q^{1,N},\ldots,Q^{N,N})\in C([0,T],\RR^N)$ is the unique solution of the following linear ODE system
	\begin{align}\label{ODE.system}
		\dot{Q}^{i,N}(t)  &= -\frac{1}{N}\sum_{j= 1}^NG^N_{j,i}Q^{j,N}(t), \quad i \in [N],\\
		Q^{i,N}(T) &=1, \quad i \in [N]. \nonumber
	\end{align}
    This ODE system characterizes the solution of the contracting problem. The next theorem summarizes the result.
    \begin{theorem}\label{opt.contract.finite.example1}
         Let $\bar{Q}^N\in C([0,T],\RR^N)$ be the unique solution of the linear system \eqref{ODE.system}. Then,
         \begin{itemize}
         \item[a)] $V^N_p = \frac{1}{N}\sum_{i=1}^NQ^{i,N}(0){x}_0^{i,N}+\frac{1}{2N}\sum_{i=1}^N\int_0^T \left(Q^{i,N}(t)\right)^2dt-\frac{1}{N}\sum_{i=1}^N R_a(i/N)$.
         \item[b)] The contracts $\bar{\xi}^{N,*}:=(\xi^{1,N,*},\ldots,\xi^{N,N,*})$ defined as
         \begin{align*}
		\xi^{i,N,*}&:= R_a(i/N)- \int_0^T\left(\frac{1}{2}(Q^{i,N}(s))^2+Q^{i,N}(s)\frac{1}{N}\sum_{j=1}^NG_{i,j}^NX_s^{j,N}\right)ds\\
		            &\quad +\int_0^T{Q}^{i,N}(s) d{X}^{i,N}_s, \quad i \in [N],
	\end{align*}
    are optimal.
    \item[(c)] Under $\PP^{\bar{Q}^N}$, the optimal contracts $\bar{\xi}^{N,*}$ form an independent
    Gaussian vector with
    \begin{align*}
        \EE^{\PP^{\bar{Q}^N}}[\xi^{i,N,*}]
        &=R_a(i/N) + \frac{1}{2}\int_0^T \left(Q^{i,N}(s)\right)^2ds,\\
        \operatorname{Var}^{\PP^{\bar{Q}^N}}[\xi^{i,N,*}]
        &= \int_0^T \left(Q^{i,N}(s)\right)^2ds, \quad i \in [N].
    \end{align*}
    \item[d)] $V_a^i({\xi}^{i,N,*})=R_a(i/N),\quad$ $i \in [N]$.
    \item[e)] $\alpha_t^{i,N,*}=Q^{i,N}(t)$, $\quad (i,t) \in [N]\times [0,T]$.
    \end{itemize}
    \end{theorem}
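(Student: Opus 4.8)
The plan is to build on the reduction already carried out following Élie and Possamaï \cite{competitive}: the principal's problem is the linear--quadratic stochastic control problem defining $v$ in \eqref{control.finite.agents}, admissible contracts take the form $\xi^{i,N}=Y^{i,Y_0,Z}_T$, and agent $i$'s best response to such a contract is $\alpha^{i,*}_t=Z^{i,i}_t$. It therefore suffices to (i) solve the control problem through its HJB equation, (ii) verify the candidate, and (iii) read off the contract and its properties. For (i), I would substitute the affine-in-$x$ ansatz
\[
v(t,x)=\frac{1}{N}\sum_{i=1}^N Q^{i,N}(t)\,x_i+\frac{1}{2N}\sum_{i=1}^N\int_t^T\big(Q^{i,N}(s)\big)^2\,ds
\]
into the parabolic PDE. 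Since $v_{x_i}=\tfrac1N Q^{i,N}(t)$ and $v_{x_ix_i}=0$, the pointwise maximizer is $z^i=N v_{x_i}=Q^{i,N}(t)$; the quadratic contribution $-\tfrac{1}{2N}\sum_i (Q^{i,N})^2$ coming from $v_t$ and the term $\tfrac{N}{2}\sum_i v_{x_i}^2$ cancel identically, and matching the coefficient of each $x_i$ in the linear part forces precisely the system \eqref{ODE.system}. Existence and uniqueness of $\bar Q^N$ are immediate because \eqref{ODE.system} is linear with constant coefficients and terminal data $Q^{i,N}(T)=1$.

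Step (ii) is a standard verification argument, and is where I expect the only genuine care. Applying It\^o's formula to $v(t,\bar X^N_t)$ along an arbitrary admissible $\bar Z^N$ and invoking the HJB inequality yields $v(0,\bar x_0)\ge \EE[\tfrac1N\sum_j X^{j,N}_T-\tfrac{1}{2N}\sum_j\int_0^T(Z^{j,j}_s)^2ds]$, with equality under the deterministic feedback $Z^{i,i}_t=Q^{i,N}(t)$. The point requiring attention is to confirm that the stochastic integrals are true martingales and that the candidate control is admissible: since $\bar Q^N$ is continuous, hence bounded on $[0,T]$, the controlled state is Gaussian with all moments finite, which secures both the martingale property and membership in $\mathcal A$, and one checks directly that the induced contract lies in $\Sigma^N_a$. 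This gives $V^N_P=v(0,\bar x_0)-\tfrac1N\sum_i R_a(i/N)$, which is part (a).

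For step (iii), the optimal contract is $\xi^{i,N,*}=Y^i_T$ obtained from $Y^i_0=R_a(i/N)$, $Z^{i,i}_t=Q^{i,N}(t)$, and $Z^{i,j}_t=0$ for $j\neq i$ (the off-diagonal entries do not affect the value). Writing out $dY^i_t=-H^i_t\,dt+Z^i_t\cdot d\bar X^N_t$ with these choices produces the closed form in part (b). Part (e) then follows from the best-response identity $\alpha^{i,*}_t=Z^{i,i}_t=Q^{i,N}(t)$, and part (d) from the fact that the agent's continuation value equals $Y^i_0=R_a(i/N)$, so the reservation constraints bind. Finally, for part (c) I would pass to $\PP^{\bar Q^N}$: substituting the drift of $\bar X^N$ into $dY^i_t$, the $G$-dependent terms cancel against those inside $H^i_t$, leaving $dY^i_t=\tfrac12(Q^{i,N}_t)^2\,dt+Q^{i,N}_t\,dW^{\bar Q,i}_t$, where $W^{\bar Q,i}$ denotes the $i$-th component of the $\PP^{\bar Q^N}$-Brownian motion, so that $\xi^{i,N,*}=R_a(i/N)+\tfrac12\int_0^T(Q^{i,N})^2dt+\int_0^T Q^{i,N}\,dW^{\bar Q,i}$. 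Because $\bar Q^N$ is deterministic and the components $W^{\bar Q,i}$ are independent, this is an independent Gaussian vector with the stated mean $R_a(i/N)+\tfrac12\int_0^T(Q^{i,N})^2dt$ and variance $\int_0^T(Q^{i,N})^2dt$.
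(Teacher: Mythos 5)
Your proposal is correct and follows essentially the same route as the paper: the reduction via Élie and Possamaï to the linear--quadratic control problem, the affine ansatz in the HJB equation yielding the ODE system \eqref{ODE.system} and the feedback $Z^{i,i}_t=Q^{i,N}(t)$, and then reading off the contract, the binding reservation constraints, and the Gaussian law of $\xi^{i,N,*}$ under $\PP^{\bar Q^N}$ exactly as in the text preceding the theorem. Your added care about the martingale property and admissibility in the verification step is a point the paper leaves implicit ("it is straightforward to verify"), but it does not constitute a different argument.
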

	\subsection{Contracting a continuum of heterogeneous agents}\label{continuum.agents.example}
We introduce the heterogeneous mean-field version of the principal-agent model studied previously. The technical setup and notation are described in Section \ref{section.technicalsetup}. Let $\mu \in C\left([0,T],\mathcal{P}^2_{\text{Unif}}\left( [0,1]\times \RR  \right)\right)$ be a flow of measures such that $\mu_0 = \lambda$, where $\lambda$ admits a disintegration $\hat{\lambda}$ satisfying Assumption \ref{ass.G}. We describe the dynamics of the representative agent's project as follows:
	\begin{align*}
		&dX_t =\left(\int_{[0,1]\times\RR}G(U,v)x \mu_t(dv,dx)+\alpha_t \right)dt + dW^{\alpha,\mu}_t, \quad \PP^{\alpha,\mu}-a.s.,	\end{align*}
	where
    \begin{align}
        \frac{d\PP^{\alpha,\mu}}{d\PP_0} := \mathcal{E}\left(\int_0^\cdot\left(\int_{[0,1]\times\RR}G(U,v)x \mu_t(dv,dx)+\alpha_t \right) dW_t \right)_T,
    \end{align}
   and $(X_0,U) \sim \lambda$, which satisfies Assumption \ref{ass.G}. We use the same interaction function $G$ as in the previous subsection.

    Given a contract $\xi \in \Sigma_a$, the representative agent maximizes the following linear-quadratic objective
	\begin{equation*}
		J_a(\alpha,\mu,\xi) := \EE^{\PP^{\alpha,\mu}} \left[\xi-\frac{1}{2}\int_0^T \alpha_s^2ds\right], \quad
        V_a(\mu,\xi) = \sup_{\alpha \in \mathcal{A}}J_a(\alpha,\mu,\xi).
	\end{equation*}
    In addition, we assume that the reservation utility of the representative agent is given by the random variable $\tilde{R}_a = R_a(U)$, where $R_a : [0,1]\to \RR$ is the function used in the previous subsection and satisfies Assumption \ref{ass.reservation}. The principal, in turn, optimizes the following expression:
	\begin{equation}
		J_p(\xi) := \sup_{(\alpha,\mu) \in \mathcal{M}^*(\xi)}\EE^{\PP^{\alpha,\mu}} \left[X_T-\xi\right].
	\end{equation}
Let $(Y_0,Z^1)$ be given. We denote by $Y^{Y_0,Z^1}$ the following controlled process
\begin{align*}
		Y_t &= Y_0+\frac{1}{2}\int_0^t (Z^1_s)^2ds +\int_0^t Z^1_s dW^{Z,\mu}_s, \quad \PP^{\alpha^*(Y_0,Z^1),\mu}-a.s.,\quad t\in [0,T],
	\end{align*}
    where $(X,\mu)$ solve the McKean--Vlasov SDE
 \begin{align*}
				dX_t &=\left(\int_{[0,1]\times\RR}G(U,v)x \mu_t(dv,dx)+Z^1_t \right)dt +dW^{Z,\mu}_t, \quad \PP^{Z,\mu}-a.s.,	\\
                \mu_t &= \PP^{Z,\mu}\circ (U,X_t)^{-1}.
	\end{align*}
Using Corollary \ref{theorem.mean.field.control}, we can write the principal's problem as the following mean-field control problem:
	\begin{align}\label{meanfieldcontrol.example}
		\hat{J}_p(Y_0,Z) :=J_p(Y^{Y_0,Z}_T)= \EE^{\PP^{Z,\mu}}\left[X_T-\frac{1}{2}\int_0^TZ_t^2dt \right]-\EE^{\PP^{Z,\mu}}[Y_0].
	\end{align}
   First, note that $Y^*_0 = R_a(U)$. We then find $Z^*$ by verification. We fix the probability space $(\Omega,\mathbb{F},\mathbb{P})$, on which we construct a $\PP$-Brownian motion $W$ and a random vector $(U,X_0)\sim \lambda$, independent of $W$. We write the principal's dynamic value function as
   \begin{align}\label{value.state.measures}
		{v}(t,\mu) &:= \sup_{Z}\EE\left[X_T-\frac{1}{2}\int_t^T(Z_s)^2ds \right],
    \end{align}
    For all $0 \leq t \leq s\leq T$, $(X_s,\mu_s^{t,\mu})$ denotes the solution of the McKean--Vlasov SDE
    \begin{align}\label{controlledMcKeanExample}
        X_s &= X_t+\int_t^s\left( \int_{[0,1]\times \RR}G(U,u)x\mu^{t,\mu}_{r}(du,dx)+Z_r\right)dr+ W_s-W_t,\\
        \mu_s^{t,\mu} &= \PP \circ  (U,X_s)^{-1}, \nonumber\\
        (U,X_t) &\sim \mu.  \nonumber
    \end{align}
    By Remark $4.22$ in Carmona and Delarue \cite{Carmonabook}, equation \eqref{controlledMcKeanExample} admits a unique solution for all square-integrable, $\FF$-adapted processes $Z$.

Finally, we use Lemma \ref{thm.verification}. First, we introduce the Hamilton-Jacobi-Bellman (HJB) equation associated with the control problem \eqref{value.state.measures}.
    \begin{align}
        \partial_t v(t,\mu) &+ \int_{[0,1]\times \RR}\sup_{z\in \RR}\left\{\Big( \EE^{\mu}[G(u,U)X]+z\Big)\partial_{\mu_2}v(t,\mu)(u,x)-\frac{1}{2}z^2 \right\}\mu(du,dx) \nonumber \\
        &+\frac{1}{2}\EE^\mu\left[\left(\partial_{x}\partial_{\mu_2}v(t,\mu)(U,X)\right)\right]= 0,\label{hjb.model}
    \end{align}
    with terminal condition  $v(T,\mu) = \EE^{\mu}[X]$.
      In this equation, $(U,X) \sim \mu$, and $\partial_{\mu_2}v(t,\mu)$ denotes the second component of the lifted derivative on $\mathcal{P}^2(\RR^2)$. The HJB equation \eqref{hjb.model} can be rewritten as follows:
    \begin{align}\label{hjb.model2}
\partial_tv(t,\mu) &+ \frac{1}{2}\EE^{\mu}\left[\big({\partial_{\mu_2}}v(t,\mu)(U,X)\big)^2\right]+\int_{[0,1]\times\RR}\EE^{\mu}\left[G(u,U)X\right]\left({\partial _{\mu_2}}v(t,\mu)(u,x)\right)\mu(du,dx)\\
		&+\frac{1}{2}\EE^{\mu}\left[\left({\partial_x}{\partial_{\mu_2}}v(t,\mu)\right)(U,X)\right]= 0, \nonumber
	\end{align}
    and $v(T,\mu) = \EE^{\mu}[X]$.

   We solve this PDE using the ansatz
    $$w(t,\mu) := \EE^{\mu}\left[Q\left(t,U\right){X}\right]+P(t),$$ where $({U},{X})\sim \mu$, and  $Q: [0,T]\times [0,1]\to \RR$, $P:[0,T]\to \RR$ are functions to be determined. We start by computing the following quantities:
	\begin{align*}
w_t(t,\mu)&=\EE^{\mu}\left[\dot{Q}(t,U)X\right]+\dot{P}(t),\\
		\int_0^1\EE^{\mu}\left[G(u,U)X\right] \partial_{\mu_2}w(t,\mu)(u)du&= \int_0^1\EE^{\mu}\left[G(u,U)Q(t,u)X\right]du, \\
\EE^{\mu}\left[\big({\partial_{\mu_2}}w(t,\mu)(U,X)\big)^2\right]&=\int_0^1Q(t,u)^2du, \\
\frac{1}{2}\EE^{\mu}\left[\left({\partial_x}{\partial_{\mu_2}}w(t,\mu)\right)(U,X)\right]&=0.
	\end{align*}
	Substituting these expressions into the HJB equation \eqref{hjb.model2}, we obtain
	\begin{align*}
		&\int_{ [0,1]\times \RR}x\Big(\dot{Q}(t,u)+\int_0^1G(v,u)Q(t,v)dv\Big)\mu(du,dx)= 0, \quad t\in [0,T], \\  &{P}(t) = \frac{1}{2}\int_t^T\int_0^1Q(s,u)^2duds,\quad t\in [0,T].
	\end{align*}
	This equation is satisfied when $Q$ solves the following infinite-dimensional ODE system:
	\begin{align}\label{infinite.system}
		&\dot{Q}(t,u) = -\int_0^1G(v,u)Q(t,v)dv,\quad u \in [0,1],\\
		&Q(T,u) = \mathbbm{1}_{[0,1]}(u).  \nonumber
	\end{align}
	Under Assumption \ref{ass.G}, this system admits a unique solution $Q \in C\left([0,T],L^2([0,1])\right)$. Moreover, for all $t\in [0,T]$, $Q(t,\cdot)$ is piecewise Lipschitz continuous on each interval $I_i, i \in [m]$, defined in Assumption \ref{ass.G}. Thus,
    \begin{align*}
        v(t,\mu) = \EE^{\mu}[Q(t,U)X]+P(t),
    \end{align*}
    where $Q$ solves \eqref{infinite.system}, and $P(t)=\frac{1}{2}\int_t^T\int_0^1Q(s,u)^2duds$.
	It remains to show that $\big(Q(\cdot,U),\PP\circ (U,X)^{-1}\big)$ is a heterogeneous mean-field equilibrium for the representative-agent problem. Applying Remark $4.22$ in Carmona and Delarue \cite{Carmonabook} (Volume I), there exists a unique solution of the following McKean--Vlasov SDE:
    \begin{align}\label{mckean.vlasov.system}
		dX_t &= \left( Q(t,U)+ \int_{[0,1]\times\RR}G(U,v)x\mu_t^*(dv,dx)\right)dt +dW_t, \\
        \mu^*_t(dv,dx)&=\PP\circ \left(U,X_t\right)^{-1}, \nonumber\\
		(U,X_0) &\sim \lambda. \nonumber
	\end{align}
    Using Lemma \ref{thm.verification}, we obtain the following optimal contract:
    \begin{align*}\label{eqn.McKeanVlasov}
        \xi^* &:= R_a(U)-\int_0^T \left(\frac{1}{2}Q(t,U)^2+Q(t,U)\int_{[0,1]\times \RR}G(U,v)x\mu^*_t(dv,dx) \right)dt +\int_0^TQ(s,U)dX_s.
    \end{align*}
    Moreover, $(Q,\mu^*)$ defines a heterogeneous mean-field equilibrium given the contract $\xi^*$, where $Q$ solves \eqref{infinite.system} and $\mu^*$ is defined in terms of the unique solution of the McKean--Vlasov equation \eqref{mckean.vlasov.system}.

    Thus, the solution of the heterogeneous principal-agent problem is characterized by the solution $Q$ of the infinite-dimensional system \eqref{infinite.system}. The following theorem summarizes the result.
    \begin{theorem}\label{opt.contract.interaction.example.1}
         Suppose Assumption \ref{ass.G} holds, and let
        $Q \in C([0,T],L^2([0,1]))$ be the unique solution of the linear system \eqref{infinite.system}. Then,
        \begin{itemize}
        \item[a)] $V_p = \int_{[0,1]\times \RR}Q(0,u)x\lambda(du,dx)+\frac{1}{2}\int_0^T\int_{0}^1Q(t,u)^2dudt -\int_0^1R_a(u)du$.
        \item[b)] The following contract is optimal:
        \[
        \xi^* = R_a(U)-\int_0^T \left(\frac{1}{2}Q(t,U)^2+Q(t,U)\int_{[0,1]\times \RR}G(U,v)x\mu^*_t(dv,dx) \right)dt +\int_0^TQ(s,U)dX_s.
        \]
        Here $\mu^*$ is determined uniquely by the McKean--Vlasov equation \eqref{mckean.vlasov.system}.
           \item[c)] $\PP^{Q(\cdot,U)}\left( \xi^*\in \cdot |U=u\right)$ is Gaussian for all $u \in [0,1]$, with
        \begin{align*}
            \EE^{\PP^{Q(\cdot,U),\mu^*}}\left[\xi^*\mid U=u\right]
            &=R_a(u)+\frac{1}{2}\int_0^T Q(t,u)^2dt,\\
            \operatorname{Var}^{\PP^{Q(\cdot,U),\mu^*}}\left[\xi^*\mid U=u\right]
            &=\int_0^T Q(t,u)^2dt.
        \end{align*}
        \item[d)] $V^a_0(\xi^*) = R_a(U), \lambda -a.s.$
        \item[e)] $\alpha^*(t,u) = Q(t,u)$, $(t,u) \in [0,T]\times [0,1]$.
        \end{itemize}

    \end{theorem}
\subsubsection{Structure of the optimal contracts}
The representation in Theorem \ref{opt.contract.interaction.example.1} shows that the optimal contract separates into a fixed component and a performance-based component:
\begin{align*}
\xi^*
=
&\underbrace{R_a(U)-\int_0^T
\left[
\frac{1}{2}Q(t,U)^2
+
Q(t,U)
\int_{[0,1]\times \RR}G(U,v)x\mu_t^*(dv,dx)
\right]dt}_{\text{fixed component}}\\
&\quad+
\underbrace{\int_0^T Q(t,U)dX_t}_{\text{performance-based component}}.
\end{align*}
The fixed component of the contract is set so that the agent's participation constraint binds. It guarantees the required reservation utility while also accounting for the deterministic part of the output process and the cost of the effort induced by the contract. The performance-based component is linear in the realized output path. Its coefficient is $Q(t,U)$, which also corresponds to the equilibrium effort chosen by an agent of type $U$.

This formulation gives the function $Q$ a clear economic meaning. For an agent of type $u$ at time $t$, $Q(t,u)$ measures the sensitivity of the optimal contract with respect to the agent's output. Thus, $Q(t,u)$ is the slope of the performance-based component of the contract and coincides with the equilibrium effort induced by the contract. As a consequence of Theorem \ref{Lemma.qualitative}, more influential agents receive steeper incentives. This is consistent with the standard moral-hazard intuition that contract slopes should reflect the value of incentivizing an agent's performance. In the present setting, that value depends on the agent's influence through the interaction structure.

    \subsubsection{Spectral analysis in symmetric interactions}

We next consider the benchmark case in which the interaction function is symmetric:
\[
G(u,v)=G(v,u), \qquad (u,v)\in[0,1]^2.
\]
In this case, under Assumption \ref{ass.G}, the integral operator
\[
Kf(u)=\int_0^1 G(v,u)f(v)\,dv, \qquad f\in L^2([0,1]),
\]
is compact and self-adjoint. Since the effort process satisfies the backward linear equation
\[
 \dot{Q}(t,\cdot)=-KQ(t,\cdot), \qquad Q(T,\cdot)=\mathbbm{1}_{[0,1]}(\cdot),
\]
we can diagonalize the dynamics using the eigenbasis of $K$.
Since $K$ is compact and self-adjoint, choose an orthonormal basis $(\varphi_i)_{i= 1}^\infty$ of $L^2\left([0,1] \right)$
consisting of eigenfunctions of $K$, with corresponding eigenvalues $(\lambda_i)_{i=1}^\infty$. Set
\[
c_i:=\int_0^1\varphi_i(v)\,dv, \qquad m_i:=\int_{[0,1]\times\RR}x\varphi_i(u)\lambda(du,dx).
\]
Then the optimal effort is
\[
\alpha^*(t,u)=Q(t,u)
=
\sum_{i=1}^{\infty} c_i e^{\lambda_i(T-t)}\varphi_i(u).
\]
Substituting this expression into Theorem \ref{opt.contract.interaction.example.1} gives the following closed-form expression for the principal's value:
\[
V_p
=
\sum_{i=1}^{\infty} c_i e^{\lambda_iT}m_i
+
\frac12\sum_{i=1}^{\infty}c_i^2\int_0^T e^{2\lambda_i(T-t)}\,dt
-\int_0^1R_a(u)\,du.
\]
This formula shows that the optimal contract depends not only on an agent's
individual type \(u\). Instead, the principal rewards directions in which
effort has the largest effect on the population as a whole. The interaction
operator \(K\) identifies these directions. Its eigenfunctions represent
different patterns of effort across agents, while its eigenvalues measure how
strongly each pattern is reinforced by the interaction structure.

The spectral representation shows how the timing of incentives depends on the eigenvalues of the interaction operator. If \(\lambda_i>0\), then
$$
e^{\lambda_i(T-t)}
$$
is larger for earlier times \(t<T\). Hence, the corresponding eigenmode is weighted more heavily earlier in the contract horizon. Economically, a positive eigenvalue measures how strongly effort along the corresponding eigenfunction is amplified by the interaction network. Incentivizing such a mode has effects that propagate through the network rather than remaining local. The principal therefore assigns stronger incentives to eigenmodes associated with larger positive eigenvalues.

By contrast, the coefficient
$$
c_i = \int_0^1 \varphi_i(v)\,dv
$$
captures a different consideration: whether the eigenmode matters for aggregate
output. Some patterns of effort may be strongly amplified by the network but
still have little effect on the principal's objective because their positive and
negative components cancel out across the population. These modes are
orthogonal, or nearly orthogonal, to the constant function, so \(c_i\) is small.
As a result, they receive little weight in the optimal effort profile.

The most important modes are therefore those that satisfy two conditions at
once. First, they must affect aggregate output, meaning that \(c_i\) is large.
Second, they must be reinforced by the interaction structure, meaning that
\(\lambda_i\) is large and positive. The optimal contract puts the strongest
incentives on precisely these modes: patterns of effort that both contribute to
the principal's objective and spread effectively through the population.

\subsubsection{Comparative statics}
        In this section, we develop a deeper understanding of how optimal incentives depend on agents' influence on one another.
        We show that, from the principal's viewpoint, it is optimal to incentivize more influential agents to exert higher effort. To do so, we formalize the notion of influence.
        \begin{definition}
        Let $u_1, u_2 \in [0,1]$ be two agent types.
        We say that $u_1$ is at most as influential as $u_2$ if and only if
        \begin{align*}
        G(v,u_1) &\leq G(v,u_2),\quad \forall v \in [0,1].        \end{align*}
        \end{definition}
        The following theorem shows that agents with greater influence are incentivized to exert higher effort.
        \begin{theorem}\label{Lemma.qualitative}
            Let $G:[0,1]\times [0,1]\to \RR$ be a nonnegative interaction function that satisfies Assumption \ref{ass.G}, and $u_1, u_2 \in [0,1]$.
            Suppose that $u_1$ is at most as influential as $u_2$. Then, $\alpha^*(t,u_1) \leq \alpha^*(t,u_2)$, for all $t \in [0,T]$.
        \end{theorem}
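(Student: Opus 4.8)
The plan is to read the conclusion off the explicit representation of the optimal effort. By Theorem~\ref{opt.contract.graphon.example.1}(e), $\alpha^*(t,u) = Q(t,u)$, where $Q$ is the unique solution of the linear system~\eqref{infinite.system}. Integrating that system backward from $t$ to $T$ and using the terminal condition $Q(T,\cdot)=1$ on $[0,1]$ produces the fixed-point identity
\begin{align}\label{eq:intform}
Q(t,u) = 1 + \int_t^T\int_0^1 G(v,u)\,Q(s,v)\,dv\,ds, \qquad (t,u)\in[0,T]\times[0,1].
\end{align}
Everything then reduces to extracting the sign of $Q$ and of the $u$-increments of the right-hand side of~\eqref{eq:intform}.

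First I would prove that $Q(s,v)\ge 0$ for all $(s,v)$. Since $\|G\|_\infty<\infty$ by Assumption~\ref{ass.G}, the map $F\mapsto 1+\int_t^T\int_0^1 G(v,\cdot)F(s,v)\,dv\,ds$ is a Volterra-type operator whose Picard iterates converge (the $n$-th iterate applied to the difference of two inputs carries a factor $(\|G\|_\infty T)^n/n!$), so $Q$ is the uniform limit of $Q_0\equiv 1$, $Q_{n+1}(t,u)=1+\int_t^T\int_0^1 G(v,u)Q_n(s,v)\,dv\,ds$. Because $G\ge 0$ and $Q_0\equiv 1\ge 0$, an immediate induction gives $Q_n\ge 1$ for every $n$; passing to the limit yields $Q(s,v)\ge 1>0$ throughout.

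With positivity established, the comparison is immediate. Writing~\eqref{eq:intform} for $u=u_1$ and $u=u_2$ and subtracting, the constant terms cancel and
\begin{align*}
Q(t,u_2)-Q(t,u_1) = \int_t^T\int_0^1 \big(G(v,u_2)-G(v,u_1)\big)\,Q(s,v)\,dv\,ds.
\end{align*}
The assumption that $u_1$ is at most as influential as $u_2$ means $G(v,u_2)-G(v,u_1)\ge 0$ for every $v\in[0,1]$, and the previous step gives $Q(s,v)\ge 0$; hence the integrand is nonnegative and $Q(t,u_2)\ge Q(t,u_1)$ for all $t\in[0,T]$. Since $\alpha^*(t,u)=Q(t,u)$, this is exactly the claim.

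The only genuinely delicate point is the pointwise-in-$u$ meaning of these manipulations: a priori $Q\in C([0,T],L^2([0,1]))$, so~\eqref{eq:intform} and the subtraction hold for almost every $u$. I would resolve this by working with the continuous representative of $Q(t,\cdot)$ guaranteed by Theorem~\ref{opt.contract.graphon.example.1} (it is piecewise Lipschitz on each interval $I_i$ of Assumption~\ref{ass.G}); for each fixed $u$ the kernel $G(\cdot,u)$ lies in $L^2([0,1])$, so the inner integral in~\eqref{eq:intform} is a well-defined continuous function of $u$, and the identities and sign comparisons then hold at the two specific points $u_1,u_2$. Apart from this bookkeeping, the argument uses nothing beyond the nonnegativity and boundedness of $G$.
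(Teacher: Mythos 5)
Your proposal is correct and follows essentially the same route as the paper: identify $\alpha^*(\cdot,u)=Q(\cdot,u)$, rewrite the linear system \eqref{infinite.system} in integral form, establish $Q\geq 0$ from the nonnegativity of $G$, and compare the two kernels $G(\cdot,u_1)\leq G(\cdot,u_2)$ under the integral. The only difference is that you justify the positivity of $Q$ carefully via Picard iteration (obtaining the sharper bound $Q\geq 1$), whereas the paper simply asserts it; this is a welcome but minor elaboration, not a different argument.
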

        \begin{proof}
            Using Theorem \ref{opt.contract.interaction.example.1}, we have $\alpha^*(\cdot,u_i)  = Q(\cdot,u_i)$ for $i \in \{1,2\}$, where $Q$ solves the linear system
            \begin{align*}
                \dot{Q}(t,u) &=-\int_0^1G(v,u)Q(t,v)dv,\\
                Q(T,u) &= \mathbbm{1}_{[0,1]}(u).
            \end{align*}
           This system can be rewritten as
            \begin{align*}
                Q(t,u) = 1+\int_t^T\int_0^1 G(v,u)Q(s,v)dv\,ds.
            \end{align*}
            Let $K$ be the positive integral operator $(Kf)(u):=\int_0^1G(v,u)f(v)dv$. The terminal-value problem gives $Q(t,\cdot)=e^{(T-t)K}\mathbbm{1}_{[0,1]}(\cdot)$, and hence
            \[
            Q(t,\cdot)=\mathbbm{1}_{[0,1]}(\cdot)+\sum_{n\geq 1}\frac{(T-t)^n}{n!}K^n\mathbbm{1}_{[0,1]}(\cdot)\geq \mathbbm{1}_{[0,1]}(\cdot),
            \]
            because $G$ is nonnegative and therefore $K$ preserves nonnegative functions. Then,
            \begin{align*}
              \alpha^*(t,u_1) &=  Q(t,u_1) \\
              &= 1 + \int_t^T\int_0^1 G(v,u_1)Q(s,v)dv\,ds \\
              &\leq  1 + \int_t^T\int_0^1 G(v,u_2)Q(s,v)dv\,ds \\
              &= \alpha^*(t,u_2),
            \end{align*}
         which proves the statement.
            \qed
        \end{proof}

\paragraph{Source value density.}
The expression for the principal's value in Theorem~\ref{opt.contract.interaction.example.1} can be decomposed by source type. Let
$
m_0(u):=\int_{\mathbb R}x\,\widehat\lambda(u,dx)
$. Then the principal's value satisfies
\[
V_p
=
\int_0^1
\left[
Q(0,u)m_0(u)
+
\frac12\int_0^T Q(t,u)^2\,dt
-
R_a(u)
\right]du.
\]
We therefore define the source value density by
\begin{align}\label{source_value}
v_p(u)
:=
Q(0,u)m_0(u)
+
\frac12\int_0^T Q(t,u)^2\,dt
-
R_a(u).
\end{align}
Thus \(V_p=\int_0^1v_p(u)du\). The density \(v_p\) decomposes aggregate value by the source of value creation in the interaction structure. The first term in \eqref{source_value} is the value of the initial state of type \(u\), propagated through the network by the adjoint coefficient \(Q(0,u)\). The second term is the net value generated by the effort optimally induced from type \(u\), since \(\alpha^*(t,u)=Q(t,u)\). The last term subtracts the reservation utility required to make participation acceptable for type \(u\).

This source-based density is different from the conditional payoff
\[
\EE^{\PP^{\alpha^*,\mu^*}}[X_T-\xi^*\mid U=u],
\]
which attributes value to the type at which terminal output is realized. By contrast, the source density measures the contribution of each type's induced effort to the principal's value. It is therefore the natural object for studying incentive design: higher outward influence makes a type's effort more valuable to the principal and leads to steeper optimal incentives. This idea is formalized in the following theorem.

\begin{theorem}\label{Theorem.vp.influence}
    Let $G:[0,1]\times[0,1]\to\RR$ be a nonnegative interaction function satisfying Assumption \ref{ass.G}, and let $u_1,u_2\in[0,1]$. Suppose that $u_1$ is at most as influential as $u_2$. If $m_0(u_1)=m_0(u_2)\ge0$ and $R_a(u_1)=R_a(u_2)$, then
    \begin{align*}
        v_p(u_1)\le v_p(u_2).
    \end{align*}
\end{theorem}

\begin{proof}
    By Theorem~\ref{Lemma.qualitative}, $Q(t,u_1)\le Q(t,u_2)$ for every $t\in[0,T]$. Since $Q\ge1$, the inequality is preserved after multiplying the time-zero term by the common nonnegative initial mean, and after squaring and integrating the continuation term. The formula \eqref{source_value} gives $v_p(u_1)\le v_p(u_2)$. \qed
\end{proof}

The theorem isolates the comparative static when the initial distribution and the reservation utility are the same across two agents. A more influential agent contributes more to the principal's value because that agent's effort has a larger effect on aggregate terminal output. Thus, the influence ordering is reflected both in the optimal effort $\alpha^*$ and in the source value density $v_p$.

\subsubsection{Numerical results}

We use numerical simulations to illustrate how the interaction function determines the agents' optimal efforts and the principal's value. By Theorem~\ref{opt.contract.interaction.example.1}, the induced effort is
\[
    \alpha^*(t,u)=Q(t,u),
    \qquad (t,u)\in[0,T]\times[0,1].
\]
The same object is therefore both the equilibrium effort and the slope of the performance-based component of the contract. In the figures below, we report the interaction function $G$, the outward influence profile
\[
    C_G(u):=\int_0^1G(v,u)dv,
\]
the source value density $v_p$ defined in \eqref{source_value}, and selected profiles $u\to Q(t,u)$ at four times. In all numerical examples we set $T=1$, $R_a=0$, and $\hat\lambda(u,dx)=\delta_0(dx)$. Hence the initial-state term vanishes and differences in $v_p$ and $V_p=\int_0^1v_p(u)du$ are entirely generated by the incentive term $\frac12\int_0^T Q(t,u)^2dt$.

The four examples are chosen to isolate distinct economic structures: reciprocal local interaction, a global hierarchy, an extreme core--periphery organization, and separated teams with different within-team hierarchical intensities. Each interaction function is normalized to have fixed aggregate outward influence $\int_0^1C_G(u)=1$.

We focus on nonnegative interaction functions $G$. A useful benchmark for interpreting the figures is the elementary lower bound
\[
    Q(t,u)\geq 1+(T-t)C_G(u).
\]
Indeed, the adjoint representation gives
\[
    Q(t,u)=1+\int_t^T\int_0^1G(v,u)Q(s,v)dv\,ds.
\]
Since $G\geq0$ and $Q(s,v)\geq1$, the integral is bounded below by
\[
    Q(t,u)\geq 1+ \int_t^T\int_0^1G(v,u)dv\,ds=1+(T-t)C_G(u).
\]

This explains why dispersion in $C_G$ tends to translate into dispersion in $Q$, and hence into larger dispersion in the principal's source value $v_p$.

\begin{enumerate}
    \item \textit{Reciprocal local interactions. $G(u,v)
	:=
	b
	+
	a e^{-\rho |u-v|}
	+
	d uv$.}

	The first interaction function is a local benchmark. Nearby types interact most strongly, and the term $duv$ gives higher types somewhat greater average connectivity. This example has moderate dispersion in outward influence. We set $
	a=0.65, b=0.10, \rho=10, d=0.25$.

    \begin{figure}[H]
        \centering
        \includegraphics[width=0.9\textwidth]{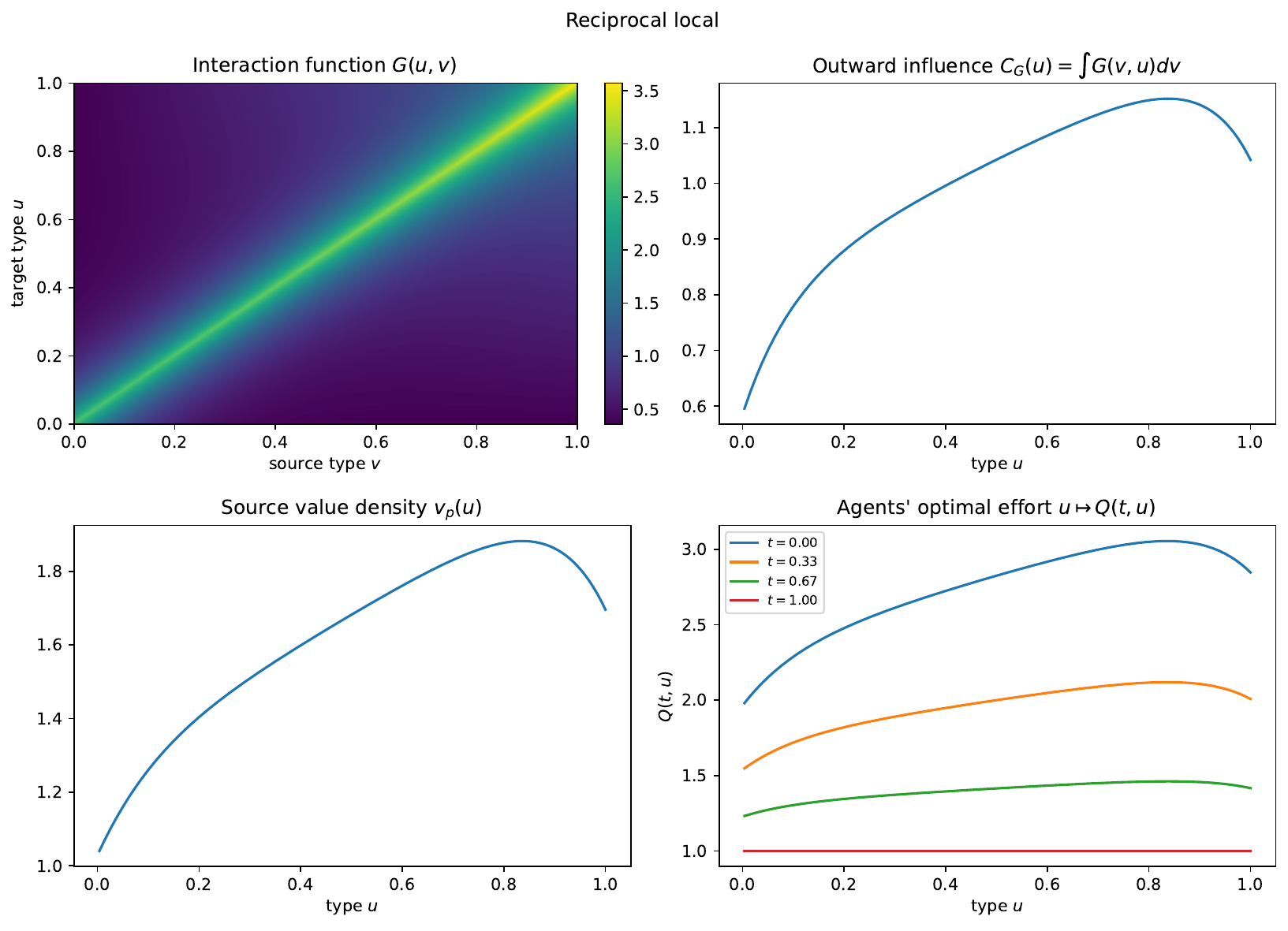}
        \caption{Reciprocal local interactions. The panels report the normalized interaction function $G$, outward influence $C_G$, source value $v_p$, and selected agents' optimal efforts $u\to Q(t,u)$.}
        \label{figureG1}
    \end{figure}

    Figure~\ref{figureG1} provides the baseline. Because the interaction is largely reciprocal and local, incentives do not concentrate on a small set of agents. Still, types with larger outward influence contribute more to the principal's value.

    \item \textit{Global hierarchy. $G(u,v):=\frac{1}{1+\exp\{\theta(u-v)\}}$.}

    The second interaction function is directional.  Higher-ranked agents affect a large mass of lower-ranked agents, while lower-ranked agents have little influence in return. We set $\theta=14$.

    \begin{figure}[H]
        \centering
        \includegraphics[width=0.9\textwidth]{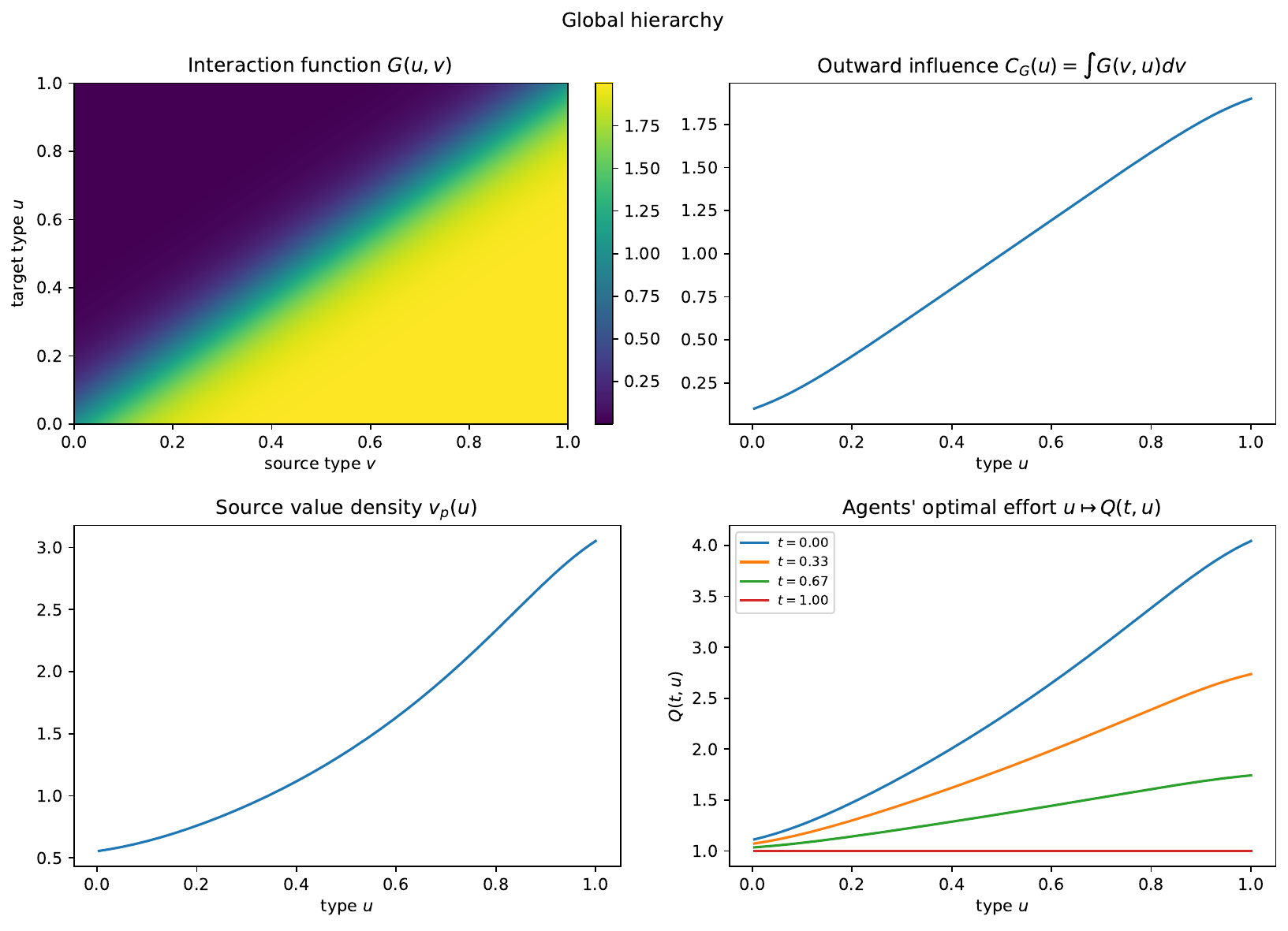}
        \caption{Global hierarchy. The interaction function is normalized. Upstream agents have larger outward influence, higher source value, and steeper optimal incentives.}
        \label{figureG2}
    \end{figure}

    Figure~\ref{figureG2} has a clear organizational interpretation. The top of the hierarchy receives the largest effort coefficient because effort by upstream agents propagates through a large part of the population. Accordingly, upstream agents receive larger incentives and contribute more to the principal's source value.

    \item \textit{Core--periphery interactions. $G(u,v):=
	a
	\left(
	\frac{1}{1+\exp\{-\kappa(v-c)\}}
	\right)
	\left(
	\ell + m e^{-\eta u}
	\right)+b$.}

	 The third interaction function makes the core--periphery structure sharper: a very small core carries most of the outward influence, while peripheral agents have only weak spillover effects. We set $
    a=0.92, b=0.08, c=0.72,
	 \kappa=30,
	 \ell =0.30,
	 m =0.70,
	 \eta =1.7$.

    \begin{figure}[H]
        \centering
        \includegraphics[width=0.9\textwidth]{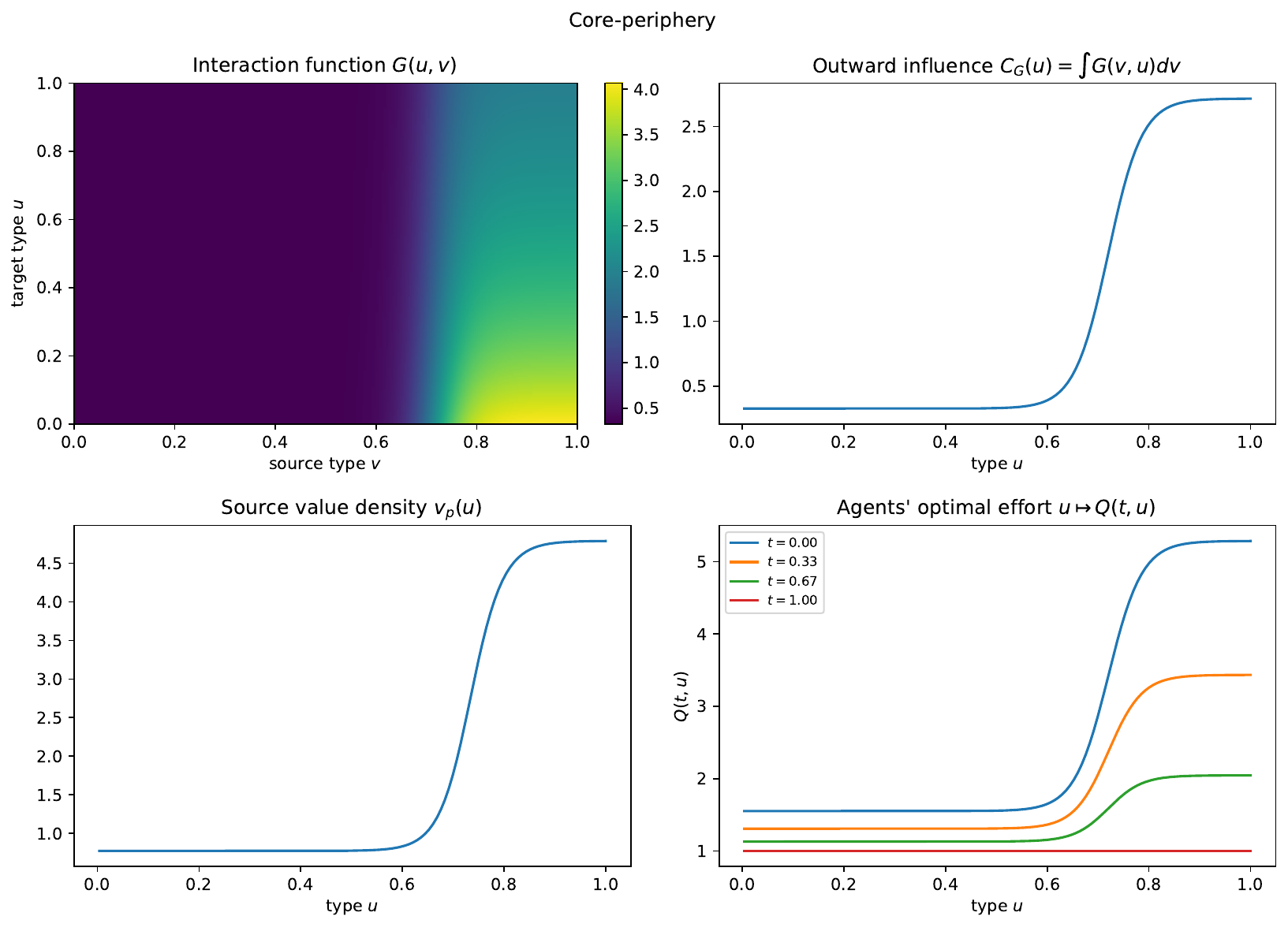}
        \caption{Core--periphery interactions. The interaction function is normalized. A small core generates very large spillovers and the optimal contract assigns substantially higher efforts and steeper incentive slopes to that core.}
        \label{FigureG3}
    \end{figure}

    Figure~\ref{FigureG3} shows that incentives need not rise smoothly along a global ranking. What matters is outward influence. In this core--periphery case, influence is concentrated on a small set of core types. As a result, the principal concentrates the strongest incentives on the core because these agents generate the largest marginal effect on the rest of the system. This example produces visibly larger variation in the principal's source value than the smoother benchmarks.

    \item \textit{Team hierarchies with different intensities. $G(u,v):=\sum_{i=0}^{L-1}
	a_i
	\mathbbm{1}_{ \left(\frac{i}{L},\frac{i+1}{L}\right]^2}(u,v)
	\frac{1}{
		1+\exp\left\{
		L\beta_i(u-v)
		\right\}
	}
		.$}

		The fourth interaction function divides the population into four separated teams. There are no cross-team interactions, but each team has its own internal hierarchy. The hierarchical intensity and interaction strength differ across teams, so the figure combines local hierarchy with heterogeneous organizational units. We set $
		L=4, (a_0,a_1,a_2,a_3)= (0.55,0.85,1.20,1.60),
		(\beta_0,\beta_1,\beta_2,\beta_3)
		=(6,10,15,22)$. In the convergence experiment below, all four benchmarks use the same dyadic partitions.

    \begin{figure}[H]
        \centering
        \includegraphics[width=0.9\textwidth]{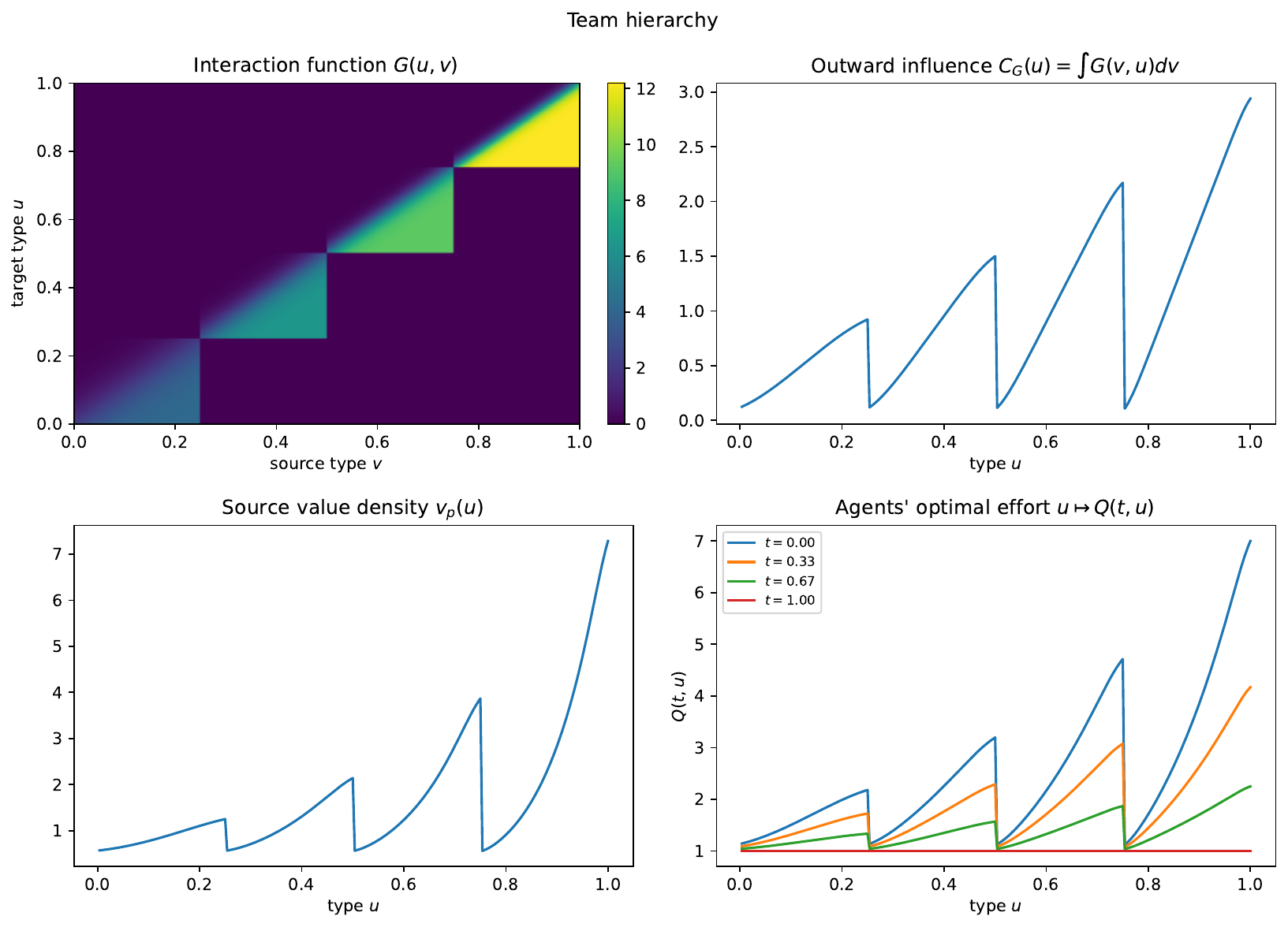}
        \caption{Team hierarchies with different intensities. The interaction function is normalized. Teams are separated, and each team has a different internal hierarchy strength. Upstream agents receive larger incentives locally, with sharper variation in the more hierarchical teams.}
        \label{FigureG4}
    \end{figure}
\end{enumerate}

Figure~\ref{FigureG4} shows why it is useful to allow different team intensities. The same qualitative pattern appears in every team: upstream agents receive larger incentives. The strength of this effect, however, varies by team. Teams with stronger internal hierarchy display sharper dispersion in $C_G$, $Q_G$, and $v_p$. This shows how the optimal contract reacts to local organizational intensity rather than only to the presence of teams.

Taken together, the four examples show that the optimal contract is shaped by the dispersion of outward influence. Reciprocal networks generate relatively smooth incentives; global hierarchies reward upstream positions; core--periphery networks reward the high-impact core even more sharply; and team hierarchies reward upstream agents locally, with the steepness depending on the intensity of each team hierarchy.

\paragraph{Dispersion of influence and principal value.}
Table~\ref{tab:variance-C-vp} reports the variance of the outward influence profile $C_G$ and the variance of the type-level value contribution $v_p$. The comparison focuses on the relation between dispersion of influence and the principal's aggregate value, rather than on the marginal-value profile alone. Using the elementary bound $Q(t,u)\geq 1+(T-t)C_G(u)$, we obtain, in our benchmark cases,
\begin{align}\label{lower_bound}
    V_p
    =\frac{1}{2}\int_0^T\int_0^1 Q(t,u)^2dudt\nonumber
    &\geq
    \frac{T}{2}+\frac{T^2}{2}\int_0^1C_G(u)du+\frac{T^3}{6}\int_0^1C_G(u)^2du \\
    &= \frac{T}{2}+\frac{T^2}{2}\int_0^1C_G(u)du+\frac{T^3}{6}\left(\int_0^1C_G(u)du\right)^2+\frac{T^3}{6}\operatorname{Var}\left( C_G(U)\right).
\end{align}
For fixed aggregate outward influence $\int_0^1C_G(v)dv=\int_0^1\int_0^1 G(u,v)dudv$, the last term in \eqref{lower_bound} increases with $\operatorname{Var}(C_G)$. Thus, greater dispersion of outward influence pushes up the lower bound for the principal's value. In the following table, we compare the principal's value with the variances of the source value density and outward influence for the four benchmark interactions studied previously. We normalize the corresponding interaction functions to impose $\int_0^1C_{G}(v)dv=1$ in each example. 

\begin{table}[H]
\centering
\caption{Dispersion of outward influence and source value density across the four numerical examples.}
\label{tab:variance-C-vp}
\begin{tabular}{lccc}
\toprule
Example & $\operatorname{Var}(C_G)$ & $\operatorname{Var}(v_p)$ & $V_p$\\
\midrule
Reciprocal local & 1.997e-02 & 5.287e-02 & 1.621e+00\\
Global hierarchy & 3.059e-01 & 5.816e-01 & 1.526e+00\\
Core-periphery & 9.695e-01 & 2.665e+00 & 1.844e+00\\
Team hierarchy & 4.891e-01 & 1.896e+00 & 1.685e+00\\
\bottomrule
\end{tabular}
\end{table}

The computed values should not be interpreted as establishing a strict ranking across all interaction structures, since the principal's value depends on the full shape of $G$, not only on $\operatorname{Var}(C_G)$. Nevertheless, the examples illustrate the general mechanism: concentrating influence on high-impact agents can strengthen network amplification of effort and increase the value extracted from performance-based incentives.

\subsubsection{Stability results}
		Another relevant question is whether optimal contracts are stable under changes in the interaction function. The following theorem estimates how agents' optimal efforts change under perturbations of ${G}$. For an interaction function ${G}: [0,1]^2 \to \RR$, denote by $\alpha_G^*$ the agents' optimal effort under ${G}$, and by $\mu^*_G$ the agents' equilibrium distribution under $G$. By Theorem \ref{opt.contract.interaction.example.1}, $\alpha_G^*(t,u)=Q_G(t,u)$, where $Q_G$ is the solution of the system \eqref{infinite.system}. 
        \begin{theorem}\label{lemma.Q.stability}
        Let $G_1, G_2:[0,1]^2 \to \RR$ be two interaction functions satisfying Assumption \ref{ass.G}. Then there exists a constant $C>0$, depending on $\|G_1\|_{\infty},\|G_2\|_{\infty}$, and $T$, such that
			\begin{align*}
        \sup_{u \in [0,1]}\sup_{t\in [0,T]}|\alpha^*_{G_1}(t,u)-\alpha^*_{G_2}(t,u)|=    &\sup_{u \in [0,1]}\sup_{t\in [0,T]}|Q_{G_1}(t,u)-Q_{G_2}(t,u)|\leq C\|G_1-G_2\|_{\infty}
          \end{align*}
		\end{theorem}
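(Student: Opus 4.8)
The plan is to exploit the linearity of the defining system \eqref{infinite.system} and run a Grönwall argument on the difference $D(t,u) := Q_{G_1}(t,u)-Q_{G_2}(t,u)$. First I would rewrite \eqref{infinite.system} in integral form using the terminal condition $Q(T,u)=1$, obtaining, exactly as in the proof of Theorem \ref{Lemma.qualitative},
$$Q_G(t,u) = 1 + \int_t^T\int_0^1 G(v,u)Q_G(s,v)\,dv\,ds.$$
As a preliminary step I would record a uniform a priori bound on the solutions: applying $|G|\le\|G\|_{\infty}$ and integrating over the unit-measure interval $[0,1]$ gives $\sup_u|Q_G(t,u)| \le 1 + \|G\|_{\infty}\int_t^T \sup_u|Q_G(s,u)|\,ds$, so the backward Grönwall inequality yields $\sup_{t,u}|Q_G(t,u)| \le e^{\|G\|_{\infty}T}=:M_G$. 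This both confirms that the left-hand side of the claimed estimate is finite and supplies the constant entering the perturbation term.

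The core estimate comes from subtracting the two integral equations and inserting the cross term $G_1(v,u)Q_{G_2}(s,v)$, which splits the difference into a stability part and a perturbation part:
$$D(t,u) = \int_t^T\int_0^1 G_1(v,u)D(s,v)\,dv\,ds + \int_t^T\int_0^1 \big(G_1(v,u)-G_2(v,u)\big)Q_{G_2}(s,v)\,dv\,ds.$$
Writing $f(t):=\sup_u|D(t,u)|$ and bounding $|G_1|\le\|G_1\|_{\infty}$, $|G_1-G_2|\le\|G_1-G_2\|_{\infty}$, and $|Q_{G_2}|\le M_{G_2}$, together with $\int_0^1|D(s,v)|\,dv \le f(s)$, I would obtain the scalar integral inequality
$$f(t) \le \|G_1\|_{\infty}\int_t^T f(s)\,ds + \|G_1-G_2\|_{\infty}\,M_{G_2}\,(T-t).$$

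Since the forcing term is bounded above by the constant $\|G_1-G_2\|_{\infty}M_{G_2}T$, a direct application of the backward Grönwall inequality gives $f(t) \le \|G_1-G_2\|_{\infty}M_{G_2}T\,e^{\|G_1\|_{\infty}(T-t)}$, whence
$$\sup_{u\in[0,1]}\sup_{t\in[0,T]}|Q_{G_1}(t,u)-Q_{G_2}(t,u)| \le C\,\|G_1-G_2\|_{\infty}, \qquad C := T\,e^{\|G_2\|_{\infty}T}e^{\|G_1\|_{\infty}T},$$
which is the asserted inequality with a constant depending on $G_1,G_2$ only through their sup-norms. Identifying $\alpha^*_{G_i}=Q_{G_i}$ via Theorem \ref{opt.contract.graphon.example.1} then transfers the bound on $Q$ to the stated bound on the optimal efforts. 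I do not expect a genuine obstacle here, as this is a standard linear-stability computation; the only point requiring minor care is the interchange of the supremum over $u$ with the $dv$-integration, which is legitimate because $[0,1]$ carries unit mass and Assumption \ref{ass.G} guarantees that $Q_G(s,\cdot)$ is bounded and measurable (indeed piecewise Lipschitz on the intervals $I_i$).
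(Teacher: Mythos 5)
Your proposal is correct and follows essentially the same route as the paper: rewrite \eqref{infinite.system} in integral form, insert the cross term $G_1Q_{G_2}$ to split the difference into a stability part and a perturbation part, and close with the backward Gr\"onwall inequality. The only (welcome) addition is that you make explicit the a priori bound $\sup_{t,u}|Q_{G}(t,u)|\le e^{\|G\|_{\infty}T}$, which the paper leaves implicit in its constant.
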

        \begin{proof}
            We start by recalling that $Q_{G_1}$ and $Q_{G_2}$ solve the systems, respectively,
        \begin{align*}
                Q_{G_1}(t,u) &= 1+\int_t^T\int_0^1G_1(v,u)Q_{G_1}(s,v)dv\,ds,\\   Q_{G_2}(t,u) &= 1+\int_t^T\int_0^1G_2(v,u)Q_{G_2}(s,v)dv\,ds,
            \end{align*}
            for all $(t,u) \in [0,T]\times [0,1]$.
            Adding and subtracting $\int_t^T\int_0^1G_1(v,u)Q_{G_2}(s,v)dv\,ds$ yields
            \begin{align*}
                \sup_{u\in [0,1]}|Q_{G_1}(t,u)-Q_{G_2}(t,u)| &\leq \Big(T\sup_{(t,u) \in [0,T]\times [0,1]}|Q_{G_2}(t,u)| \Big)\|G_1-G_2\|_{\infty}\\
&+\left(\|G_1\|_{\infty}+\|G_2\|_{\infty}\right)\int_t^T\sup_{u\in [0,1]}|Q_{G_1}(s,u)-Q_{G_2}(s,u)|ds, \quad t\in [0,T].
            \end{align*}
            Thus, by Grönwall's inequality, there exists $C>0$ (depending on $\|G_1\|_{\infty}$, $\|G_2\|_{\infty}$, and $T$) such that
            \begin{align*}
                \sup_{u \in [0,1]} \sup_{t\in [0,T]} |Q_{G_1}(t,u)-Q_{G_2}(t,u)|\leq C\|G_1-G_2\|_{\infty}.
            \end{align*}
            This completes the proof.
            \qed
        \end{proof}
        
       Using Theorem \ref{lemma.Q.stability}, we show that the equilibrium distribution of the optimal contracts varies continuously with the agents' interactions. To measure the distribution of each agent's contract in equilibrium, for each interaction function \(G\), we define the probability kernel
       \(K_G:[0,1]\to \mathcal{P}^2(\RR)\) by
       \begin{equation}\label{kernel.contract}
       K_G(u,\cdot)
       :=
       \mathcal N\left(
       R_a(u)+\frac12\int_0^T Q_G(t,u)^2\,dt,\,
       \int_0^T Q_G(t,u)^2\,dt
       \right).
       \end{equation}
       By Theorem \ref{opt.contract.interaction.example.1}, $K_G$ is a version of the regular conditional law of \(\xi_G^*\)
       given \(U=u\) under \(\mathbb P^{Q_G(\cdot,U),\mu_G^*}\). 
       The following theorem formalizes the result.
		\begin{theorem}\label{theorem.contracts.stability}
            Suppose Assumption \ref{ass.reservation} holds. Let $G_1, G_2:[0,1]^2 \to \RR$ be two interaction functions satisfying Assumption \ref{ass.G}. There exists a constant $C>0$, depending on $G_1$ and $G_2$, such that
            \[
            \mathcal{W}_2\left(K_{G_1}(u,\cdot),K_{G_2}(u,\cdot)\right)
            \leq C\left(\|G_1-G_2\|^{1/2}_{\infty}+\|G_1-G_2\|_{\infty}\right),
            \]
            for all $u\in [0,1]$. Here $\xi^*_{G_i}$ denotes the optimal contract defined in Theorem \ref{opt.contract.interaction.example.1} with $G_i$, for $i \in \{1,2\}$.
			\end{theorem}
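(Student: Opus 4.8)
The goal is to bound the Wasserstein-$2$ distance between the conditional laws of the optimal contracts $\xi^*_{G_1}$ and $\xi^*_{G_2}$, given $U=u$. The first step is to recall from Theorem \ref{opt.contract.graphon.example.1}(c) that, conditionally on $U=u$, each $\xi^*_{G_i}$ is Gaussian under the corresponding equilibrium measure, with mean $R_a(u)+\frac{1}{2}\int_0^T Q_{G_i}(t,u)^2dt$ and variance $\int_0^T Q_{G_i}(t,u)^2dt$. Since the Wasserstein-$2$ distance between two one-dimensional Gaussians $\mathcal{N}(m_1,s_1^2)$ and $\mathcal{N}(m_2,s_2^2)$ is exactly $\bigl((m_1-m_2)^2+(s_1-s_2)^2\bigr)^{1/2}=\sqrt{(m_1-m_2)^2+(s_1-s_2)^2}$, the problem reduces to controlling the difference of the means and the difference of the standard deviations in terms of $\|G_1-G_2\|_\infty$.

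The second step is to propagate the effort estimate from Theorem \ref{lemma.Q.stability}, namely $\sup_{t,u}|Q_{G_1}(t,u)-Q_{G_2}(t,u)|\le C\|G_1-G_2\|_\infty$, through the mean and variance. For the mean, I would write
\begin{align*}
\Bigl|\tfrac{1}{2}\int_0^T Q_{G_1}(t,u)^2dt-\tfrac{1}{2}\int_0^T Q_{G_2}(t,u)^2dt\Bigr|
&\le \tfrac{1}{2}\int_0^T\bigl|Q_{G_1}+Q_{G_2}\bigr|\,\bigl|Q_{G_1}-Q_{G_2}\bigr|\,dt
\le C\|G_1-G_2\|_\infty,
\end{align*}
using that both $Q_{G_i}$ are uniformly bounded (they solve the linear system \eqref{infinite.system}, so a Grönwall argument gives $\sup_{t,u}|Q_{G_i}(t,u)|\le e^{T\|G_i\|_\infty}$), and similarly for the variances $\int_0^T Q_{G_i}^2\,dt$, which are Lipschitz in the same sense. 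This gives a clean $\|G_1-G_2\|_\infty$ control on both the means and the variances.

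The third step is where the square-root enters and is the main subtlety: the Wasserstein distance involves the difference of \emph{standard deviations}, $|s_1-s_2|$ with $s_i^2=\int_0^T Q_{G_i}(t,u)^2dt$, not the difference of variances. Controlling $|s_1-s_2|$ directly from $|s_1^2-s_2^2|\le C\|G_1-G_2\|_\infty$ via $|s_1-s_2|=|s_1^2-s_2^2|/(s_1+s_2)$ is delicate if $s_1+s_2$ can degenerate to zero; the safe bound is the elementary inequality $|s_1-s_2|\le |s_1^2-s_2^2|^{1/2}$, valid since $|s_1-s_2|^2\le|s_1-s_2||s_1+s_2|=|s_1^2-s_2^2|$. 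This yields $|s_1-s_2|\le C\|G_1-G_2\|_\infty^{1/2}$, which is precisely the source of the $\|G_1-G_2\|_\infty^{1/2}$ term in the statement. Combining the mean estimate (order $\|G_1-G_2\|_\infty$) and the standard-deviation estimate (order $\|G_1-G_2\|_\infty^{1/2}$) through the Gaussian Wasserstein formula,
\begin{align*}
\mathcal{W}_2\bigl(\mathcal{N}(m_1,s_1^2),\mathcal{N}(m_2,s_2^2)\bigr)
=\sqrt{(m_1-m_2)^2+(s_1-s_2)^2}
\le |m_1-m_2|+|s_1-s_2|
\le C\bigl(\|G_1-G_2\|_\infty+\|G_1-G_2\|_\infty^{1/2}\bigr),
\end{align*}
and taking the supremum over $u\in[0,1]$ of the constants, gives the claimed bound. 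The main obstacle is the non-degeneracy issue for $s_1+s_2$, which the square-root inequality circumvents at the cost of the weaker exponent; establishing the uniform boundedness of $Q_{G_i}$ and verifying the Gaussian conditional law carefully are the remaining routine ingredients.
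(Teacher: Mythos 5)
Your proposal is correct and follows essentially the same route as the paper's proof: represent both conditional laws as one-dimensional Gaussians via Theorem \ref{opt.contract.graphon.example.1}, apply the closed-form Gaussian $\mathcal{W}_2$ formula, and control the mean difference linearly and the standard-deviation difference via the $1/2$-H\"older continuity of $\sqrt{x}$ (your inequality $|s_1-s_2|\leq|s_1^2-s_2^2|^{1/2}$ is exactly that), all fed by Theorem \ref{lemma.Q.stability}. Your explicit remark on why one avoids dividing by $s_1+s_2$ is a nice clarification of a point the paper leaves implicit.
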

		\begin{proof}
         Using the closed-form formula for the $\mathcal{W}_2$-distance between two Gaussian distributions, we obtain
          \begin{align*}
          \mathcal{W}_2\left(K_{G_1}(u,\cdot),K_{G_2}(u,\cdot)\right)^2
          &=  \frac{1}{4}\left(\int_0^T\left( Q_{G_1}(t,u)^2-Q_{G_2}(t,u)^2\right)dt\right)^2\\
          &\quad +\left(\left(\int_0^T Q_{G_1}(t,u)^2dt\right)^{1/2}-\left(\int_0^TQ_{G_2}(t,u)^2dt\right)^{1/2}\right)^2.
          \end{align*}
          Using the $1/2$-Hölder continuity of $\sqrt{x}$ and Theorem \ref{lemma.Q.stability}, there exists a constant $C>0$ (depending on $G_1$ and $G_2$) such that
          \[
          \mathcal{W}_2\left(K_{G_1}(u,\cdot),K_{G_2}(u,\cdot)\right)
          \leq C\left(\|G_1-G_2\|_{\infty}^{1/2}+\|G_1-G_2\|_{\infty}\right).
          \]
          This completes the proof.
		    \qed
		\end{proof}

    \subsection{Convergence results}\label{subsectionconvergence}

We start by showing that, as the number of agents grows, the equilibrium effort chosen by each agent in the
$N$-agent problem converges to the heterogeneous mean-field equilibrium. Recall that, for all $N\in \NN$ and $i \in [N]$, $\alpha^{i,N,*}$ and $Q^{i,N}$ are defined in Theorem \ref{opt.contract.finite.example1}.
\begin{theorem}\label{lemmaQ}
Suppose Assumption \ref{ass.G} holds. Then there exists a constant $C>0$, independent of $N$, such that, for all $N\in \mathbb{N}$,
		$$ \max_{i \in [N]}\sup_{t\in [0,T]}\big|\alpha^*(t,i/N)-\alpha^{i,N,*}(t)\big| = \max_{i \in [N]}\sup_{t\in [0,T]}\bigg|Q\left(t,\frac{i}{N}\right)-Q^{i,N}(t)\bigg|\leq \frac{C}{N}.$$
	\end{theorem}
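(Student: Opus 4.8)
The plan is to compare the two linear systems \eqref{ODE.system} and \eqref{infinite.system} through their integral forms and control the discrepancy by a Grönwall argument. Writing $e_i^N(t) := Q(t,i/N) - Q^{i,N}(t)$, the integral representations
$$Q(t,i/N) = 1 + \int_t^T \int_0^1 G(v,i/N) Q(s,v)\,dv\,ds, \qquad Q^{i,N}(t) = 1 + \int_t^T \frac{1}{N}\sum_{j=1}^N G(j/N,i/N) Q^{j,N}(s)\,ds$$
yield, after adding and subtracting $\frac1N\sum_j G(j/N,i/N)Q(s,j/N)$, the identity
$$e_i^N(t) = \int_t^T R_i^N(s)\,ds + \int_t^T \frac{1}{N}\sum_{j=1}^N G(j/N,i/N)\, e_j^N(s)\,ds,$$
where $R_i^N(s) := \int_0^1 G(v,i/N)Q(s,v)\,dv - \frac1N\sum_{j=1}^N G(j/N,i/N)Q(s,j/N)$ is a quadrature (Riemann-sum) error. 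The propagation term is immediately controlled by $\|G\|_{\infty}$ times $\max_j |e_j^N(s)|$, so the whole estimate reduces to bounding $\sup_i \sup_s |R_i^N(s)|$ by $\mathcal{O}(1/N)$.

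First I would record the a priori bounds needed for the quadrature estimate. From the integral form of \eqref{infinite.system} and Grönwall, $\sup_{t,u}|Q(t,u)| \le e^{\|G\|_{\infty} T}$; and for $u_1,u_2$ in a common interval $I_k$ of Assumption \ref{ass.G}, splitting the $v$-integral over the $I_j$ and using the Lipschitz bound on $G$ gives $|Q(t,u_1)-Q(t,u_2)| \le L_Q |u_1-u_2|$ with $L_Q := TC\, e^{\|G\|_{\infty} T}$ uniform in $t$. Consequently, for each fixed $i$ and $s$ the integrand $v\mapsto f(v) := G(v,i/N)Q(s,v)$ is piecewise Lipschitz on the partition $(I_j)$, with Lipschitz constant bounded by $\|G\|_{\infty} L_Q + C\, e^{\|G\|_{\infty} T}$ uniformly in $i$ and $s$, and $\|f\|_{\infty} \le \|G\|_{\infty} e^{\|G\|_{\infty} T}$.

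The heart of the argument is then the standard right-endpoint quadrature estimate for a piecewise Lipschitz integrand. On any cell $[(j-1)/N, j/N]$ on which $f$ is Lipschitz, the error $\int_{(j-1)/N}^{j/N}(f(j/N)-f(v))\,dv$ is bounded by $\mathrm{Lip}(f)/N^2$, so these contribute at most $\mathrm{Lip}(f)/N$ in total. The only remaining cells are the at most $m-1$ that straddle a boundary between consecutive intervals $I_k$, where $f$ may jump; on each such cell the error is crudely bounded by $2\|f\|_{\infty}/N$, giving a total of at most $2(m-1)\|f\|_{\infty}/N$. Since $m$ is a fixed constant from Assumption \ref{ass.G}, independent of $N$, we conclude $\sup_i \sup_s |R_i^N(s)| \le C_0/N$ for a constant $C_0$ depending only on $\|G\|_{\infty}$, $C$, $T$, and $m$.

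Combining the two bounds, $\epsilon^N(t) := \max_{i \in [N]} |e_i^N(t)|$ satisfies
$$\epsilon^N(t) \le \frac{C_0 T}{N} + \|G\|_{\infty} \int_t^T \epsilon^N(s)\,ds,$$
and a backward Grönwall inequality gives $\epsilon^N(t) \le (C_0 T / N)\, e^{\|G\|_{\infty}(T-t)} \le C/N$ uniformly in $t$, which is the claim. The main obstacle is the quadrature step: because $G$ and $Q(t,\cdot)$ are only \emph{piecewise} Lipschitz, one must isolate the finitely many grid cells meeting a discontinuity of the partition and argue that their number stays bounded by $m$ independently of $N$, so that their cumulative contribution does not destroy the $\mathcal{O}(1/N)$ rate.
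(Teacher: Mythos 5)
Your proof follows essentially the same route as the paper: the same add-and-subtract decomposition into a propagation term and a Riemann-sum (quadrature) error, followed by a backward Grönwall inequality. The only difference is that you spell out the quadrature bound — the a priori piecewise-Lipschitz regularity of $Q(t,\cdot)$ and the treatment of the at most $m-1$ grid cells straddling the partition boundaries — which the paper simply asserts as a consequence of Assumption \ref{ass.G}; this added detail is correct and welcome.
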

	\begin{proof}
		For $N\in \NN$, $i \in [N]$, and $t \in [0,T]$, adding and subtracting terms gives
		\begin{align}\label{ineq.Q}  &\big|Q^{i,N}(t)-Q(t,i/N)\big|
			\\
			&\leq  \int_t^T \bigg|\frac{1}{N}\sum_{j=1}^NG\left(\frac{j}{N},\frac{i}{N}\right)\left(Q\left(s,\frac{j}{N}\right)-Q^{j,N}(s)\right) \bigg|ds \nonumber\\
		  &+\int_t^T \bigg|\int_0^1G\left(v,\frac{i}{N}\right)Q(s,v)dv-\frac{1}{N}\sum_{j=1}^NG\left(\frac{j}{N},\frac{i}{N}\right)Q\left(s,\frac{j}{N}\right)\bigg|ds\nonumber\\
			&=:  I^{i,N}_1(t)+  I^{i,N}_2(t),\nonumber
		\end{align}
		where
		\begin{align*}
			I^{i,N}_1(t) &:= \int_t^T \bigg|\frac{1}{N}\sum_{j=1}^NG\left(\frac{j}{N},\frac{i}{N}\right)\left(Q\left(s,\frac{j}{N}\right)-Q^{j,N}(s)\right) \bigg|ds, \\
			I^{i,N}_2(t) &:=\int_t^T \bigg|\int_0^1G\left(v,\frac{i}{N}\right)Q(s,v)dv-\frac{1}{N}\sum_{j=1}^NG\left(\frac{j}{N},\frac{i}{N}\right)Q\left(s,\frac{j}{N}\right)\bigg|ds.
		\end{align*}
By Assumption \ref{ass.G} and the piecewise Lipschitz regularity of $Q(s,\cdot)$, the functions
$v \mapsto G\left(v,\frac{i}{N} \right)Q(s,v)$
are uniformly bounded and piecewise Lipschitz on the common finite partition, uniformly over $s \in [0,T]$ and $i\in [N]$. Therefore, the Riemann sums have
error $\mathcal{O}(1/N)$, uniformly in $(s,i) \in [0,T]\times [N]$.
		Thus, there exists $C_2>0$ such that

		$$
		\sup_{0\leq t\leq T}\big|I_2^{i,N}(t)\big| \leq \frac{C_2}{N}.
		$$
		Taking the supremum in \eqref{ineq.Q} over $i \in [N]$, we obtain
		\begin{align*}
			\sup_{i \in [N]}\bigg|Q\left(t,\frac{i}{N}\right)-Q^{i,N}(t) \bigg| &\leq \frac{C_2}{N}+\|G\|_{\infty}\int_t^T\sup_{j\in [N]}\left|Q\left(s,\frac{j}{N}\right)-Q^{j,N}(s) \right|ds, \quad t\in [0,T].
\end{align*}
Using Grönwall's inequality, there exists $C>0$, independent of $N$, such that
		\begin{align*}
			\max_{i \in [N]}\sup_{t\in [0,T]}\bigg|Q\left(t,\frac{i}{N}\right)-Q^{i,N}(t) \bigg|\leq \frac{C}{N}.
		\end{align*}
		This proves the theorem.
		\qed
	\end{proof}
	Next, we show that the principal's value function and the optimal contracts for the $N$-agent problem converge to the heterogeneous mean-field solution. We introduce a sequence of mutually independent random variables $\Big(\big(X_0^{i,N}\big)_{i\in [N]}\Big)_{N\in \mathbb{N}}$ satisfying $X_0^{i,N}\sim \hat{\lambda}\left(i/N,dx\right)$ for all $i \in [N]$. In the following theorem, we assume that the initial value of agent $i$'s project in the $N$-agent problem is given by $X_0^{i,N}\sim \hat{\lambda}(i/N,dx)$. With a slight abuse of notation in Theorem \ref{conv.Value}, we write \(V_p^N\) for the finite-agent value obtained by evaluating the
	formula in Theorem \ref{opt.contract.finite.example1} at the random initial vector $(X_0^{i,N})_{i\in [N]}$. Using the same notation as in Theorem \ref{theorem.contracts.stability}, we denote by
	$K_G$ 
	the version of the regular conditional law of $\xi^*_G$, given $U=u$, under $\PP^{\alpha^*_G,\mu^*_G}$ defined in \eqref{kernel.contract}.
    \begin{theorem}\label{conv.Value}
		Suppose Assumptions \ref{ass.G} and \ref{ass.reservation} hold.
	Then, there exists $C>0$, independent of $N$, such that
		\begin{align*}	 \EE^{\lambda}\left[(V^N_p - V_p)^2\right]&\leq \frac{C}{N}, \quad
\sup_{i\in [N]}\mathcal{W}_2\left(K_G\left(\frac{i}{N},\cdot\right),\PP^{\bar{Q}^N}\circ \left(\xi^{i,N,*}\right)^{-1}\right)\leq \frac{C}{\sqrt{N}},
		\end{align*}
        for all $N \in \mathbb{N}$.
	\end{theorem}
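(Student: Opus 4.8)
The plan is to reduce both estimates to the uniform $\mathcal{O}(1/N)$ control of the finite-population efforts provided by Theorem \ref{lemmaQ}, namely $\max_{i\in[N]}\sup_{t}|Q(t,i/N)-Q^{i,N}(t)|\leq C/N$, combined with the explicit expressions for the values and the Gaussian laws of the optimal contracts furnished by Theorems \ref{opt.contract.finite.example1} and \ref{opt.contract.graphon.example.1}. Throughout I will use that $Q$ and the $Q^{i,N}$ are uniformly bounded and that $Q(t,\cdot)$ is piecewise Lipschitz on the intervals $I_i$ of Assumption \ref{ass.G}, uniformly in $t$ (both from Theorem \ref{opt.contract.graphon.example.1}), together with the Lipschitz continuity of $R_a$ (Assumption \ref{ass.reservation}). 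Write $\bar x(u):=\int_\RR x\,\hat\lambda(u,dx)$; Assumption \ref{ass.G}(c) makes $\bar x$ piecewise Lipschitz on the same intervals, since the difference of means is dominated by the Wasserstein-$2$ distance, and $\int_{[0,1]\times\RR}Q(0,u)x\,\lambda(du,dx)=\int_0^1 Q(0,u)\bar x(u)\,du$ because $\lambda$ has uniform first marginal.

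For the value estimate, I would subtract the formula for $V_P$ in Theorem \ref{opt.contract.graphon.example.1}(a) from that for $V^N_P$ in Theorem \ref{opt.contract.finite.example1}(a) and split $V^N_P-V_P$ into three groups: the initial-output term $\frac1N\sum_i Q^{i,N}(0)x_0^{i,N}-\int_0^1 Q(0,u)\bar x(u)\,du$, the running-cost term $\frac1{2N}\sum_i\int_0^T (Q^{i,N})^2\,dt-\frac12\int_0^T\!\int_0^1 Q^2\,du\,dt$, and the reservation term $-\frac1N\sum_i R_a(i/N)+\int_0^1 R_a$. The last two are deterministic: in each I first replace $Q^{i,N}$ by $Q(\cdot,i/N)$ at cost $\mathcal{O}(1/N)$ via Theorem \ref{lemmaQ} and the difference-of-squares identity with the uniform bound on $Q,Q^{i,N}$, and then bound the remaining Riemann-sum error by $\mathcal{O}(1/N)$ using the piecewise-Lipschitz regularity of $Q(t,\cdot)$ and of $R_a$. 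The initial-output term is the only stochastic one; I would split it as $\frac1N\sum_i Q^{i,N}(0)\big(x_0^{i,N}-\bar x(i/N)\big)+\frac1N\sum_i Q^{i,N}(0)\bar x(i/N)$. The first summand has mean zero, and by independence of the $(x_0^{i,N})_i$ together with the uniform second-moment bound of Assumption \ref{ass.G}(b) its second moment equals $\frac1{N^2}\sum_i (Q^{i,N}(0))^2\,\mathrm{Var}(x_0^{i,N})=\mathcal{O}(1/N)$; the second summand is deterministic and is treated like the cost term. Writing $V^N_P-V_P=A+B$ with $A$ the mean-zero fluctuation and $B$ deterministic with $|B|\leq C/N$, one gets $\EE^\lambda[(V^N_P-V_P)^2]=\EE[A^2]+B^2\leq C/N$.

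For the contract law, both measures are Gaussian: by Theorem \ref{opt.contract.finite.example1}(c) the law $\PP^{\bar{Q}^N}\circ(\xi^{i,*}_N)^{-1}$ is $\mathcal{N}(m_2,\sigma_2^2)$ with $m_2=R_a(i/N)+\tfrac12\int_0^T (Q^{i,N})^2\,dt$ and $\sigma_2^2=\int_0^T (Q^{i,N})^2\,dt$, while by Theorem \ref{opt.contract.graphon.example.1}(c) the conditional law $\PP^{Q(\cdot,U)}|_{U=i/N}\circ(\xi^*)^{-1}$ is $\mathcal{N}(m_1,\sigma_1^2)$ with $m_1=R_a(i/N)+\tfrac12\int_0^T Q(t,i/N)^2\,dt$ and $\sigma_1^2=\int_0^T Q(t,i/N)^2\,dt$. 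I would then invoke the closed-form one-dimensional Gaussian identity $\mathcal{W}_2^2=(m_1-m_2)^2+(\sigma_1-\sigma_2)^2$ already used in Theorem \ref{theorem.contracts.stability}. Theorem \ref{lemmaQ} with the difference-of-squares identity and the uniform bound on $Q,Q^{i,N}$ gives $|\sigma_1^2-\sigma_2^2|\leq C/N$, whence $|m_1-m_2|=\tfrac12|\sigma_1^2-\sigma_2^2|\leq C/N$; the $\tfrac12$-Hölder continuity of $\sqrt{\cdot}$ yields $|\sigma_1-\sigma_2|\leq|\sigma_1^2-\sigma_2^2|^{1/2}\leq C/\sqrt N$. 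Combining these gives $\mathcal{W}_2^2\leq C/N$, hence the stated $C/\sqrt N$ bound.

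The main obstacle is the stochastic initial-output term in the value estimate: without exploiting independence of the $(x_0^{i,N})_i$ one would only obtain an $\mathcal{O}(1)$ bound, so the crux is to isolate the mean-zero fluctuation and use the uniform second-moment control of Assumption \ref{ass.G}(b). Everything else reduces to Riemann-sum estimates that are routine given the piecewise-Lipschitz regularity of $Q(t,\cdot)$ from Theorem \ref{opt.contract.graphon.example.1} and the quantitative effort convergence of Theorem \ref{lemmaQ}. A minor technical point is that these Riemann sums run against functions that are only piecewise Lipschitz across the finitely many intervals $I_i$, but the at most $m-1$ boundary cells each contribute $\mathcal{O}(1/N)$, so the overall rate is unaffected.
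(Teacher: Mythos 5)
Your proposal is correct and follows essentially the same route as the paper: the value estimate is obtained by the same decomposition into a mean-zero fluctuation (controlled via independence of the initial outputs and the uniform second-moment bound) plus deterministic terms controlled by Theorem \ref{lemmaQ} and piecewise-Lipschitz Riemann-sum estimates, and the contract-law estimate uses the identical Gaussian $\mathcal{W}_2$ formula with Theorem \ref{lemmaQ} and the $1/2$-H\"older continuity of the square root. The only difference is a harmless regrouping of the initial-output term (the paper pairs $Q^{i,N}(0)-Q(0,i/N)$ with $X_0^{i,N}$ and $Q(0,i/N)$ with the centered fluctuation, whereas you center against $Q^{i,N}(0)$ directly), which changes nothing in the rate.
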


	\begin{proof}
		For $N \in \mathbb{N}$, adding and subtracting terms gives
		\begin{align}\label{eqn.VPN}
			V_p^{N}-V_p&= \frac{1}{N}\sum_{i=1}^N \left(Q^{i,N}(0)-Q(0,i/N)\right)X^{i,N}_0+\frac{1}{N}\sum_{i=1}^NQ(0,i/N)\left(X_0^{i,N}-\int_{\RR}x\hat{\lambda}(i/N,dx)\right) \nonumber \\
			&+ \frac{1}{N}\sum_{i=1}^N Q(0,i/N)\int_{\RR} x\hat{\lambda}(i/N,dx)-\int_{0}^1Q(0,u)\int_{ \mathbb{R}}x\hat{\lambda}(u,dx)du\nonumber \\
            &+\frac{1}{2N}\sum_{i=1}^N\int_0^T\left(Q^{i,N}(t)\right)^2dt -\frac{1}{2N}\sum_{i=1}^N\int_0^TQ(t,i/N)^2dt\nonumber\\
            &+\frac{1}{2N}\sum_{i=1}^N \int_0^TQ(t,i/N)^2dt- \frac{1}{2}\int_0^T\int_0^1Q(t,u)^2dudt\nonumber\\
            &-\frac{1}{N}\sum_{i=1}^N R_a(\frac{i}{N})+\int_{0}^1R_a(u)du\nonumber \\
            &=:\sum_{i=1}^6{I}^N_i.
		\end{align}

		Applying the Cauchy-Schwarz inequality and Theorem \ref{lemmaQ}, there exists a constant $C_1>0$ such that
		\begin{align}
        \label{bound1.theorem.Value}
			\EE^{\lambda}\left[|I_1^N|^2\right]&= \EE^{\lambda}\left[\bigg| \frac{1}{N}\sum_{i=1}^N\left(Q^{i,N}(0)-Q(0,{i/N}) \right)X_0^{i,N}\bigg|^2\right]\nonumber \\
            &\leq \max_{i\in [N]}\sup_{0 \leq t\leq T}\big|Q^{i,N}(t)-Q(t,i/N)\big|^2 \sup_{u\in [0,1]}\int_{\RR}|x|^2\hat{\lambda}(u,dx)\\
            &\leq \frac{C_1}{N}.\nonumber
		\end{align}
        Since $\left(X_0^{i,N}\right)_{i=1}^N$ are independent, Assumption \ref{ass.G} implies that there exists a constant $C_2>0$, independent of $N$, such that
        \begin{align}        \EE^{\lambda}\left[|I_2^N|^2\right]&=\EE^{\lambda}\left[\left|\frac{1}{N}\sum_{i=1}^N Q(0,i/N)\left( X_0^{i,N}-\int_{\RR}x\hat{\lambda}(i/N,dx)\right) \right|^2\right] \nonumber \\
            &\leq \frac{C_2}{N}.
        \end{align}
        By Assumption \ref{ass.G}, there exists a constant $C_3>0$, independent of $N$, such that
        \begin{align*}
        |I_3^N|  =\left|\frac{1}{N}\sum_{i=1}^N Q(0,i/N)\int_{\RR} x\hat{\lambda}(i/N,dx)-\int_{0}^1Q(0,u)\int_{ \mathbb{R}}x\hat{\lambda}(u,dx)du\right| \leq  \frac{C_3}{N},
        \end{align*}
        for $N$ sufficiently large.
        Moreover, using Assumptions \ref{ass.G} and \ref{ass.reservation}, there exists $C_4>0$ such that
        \begin{align}\label{bound.I.1.6}
\big|I^N_4\big|+\big|I^N_5\big|+\big|I^N_6\big| \leq \frac{C_4}{N}.
        \end{align}
Combining the previous inequalities, there exists $C>0$, independent of $N$, such that
\begin{align*}
    \EE^{\lambda}\left[(V_p^N-V_p)^2\right]\leq 6\sum_{i=1}^6 \EE^{\lambda}\left[\big|I_i^N \big|^2\right] \leq \frac{C}{N}.
\end{align*}
It remains to bound the $\mathcal{W}_2$-distance between the distributions of the optimal contracts. By Theorems \ref{opt.contract.finite.example1} and \ref{opt.contract.interaction.example.1}, we have
\begin{align*}
    \nu_1 &:=K_G\left(\frac{i}{N},\cdot\right) = \mathcal{N}\left(R_a(i/N)+\frac{1}{2}\int_0^TQ(t,i/N)^2dt, \int_0^TQ(t,i/N)^2dt\right), \\
    \nu_2 &:= \PP^{\bar{Q}^N} \circ \left(\xi^{i,N,*}\right)^{-1} = \mathcal{N}\left(R_a(i/N)+\frac{1}{2}\int_0^TQ^{i,N}(t)^2dt, \int_0^TQ^{i,N}(t)^2dt \right).
\end{align*}
Applying the well-known expression for the $\mathcal{W}_2$-distance between two Gaussian distributions, we obtain
\begin{align*}
    \left(\mathcal{W}_2\left(\nu_1,\nu_2 \right)\right)^2
    &= \frac{1}{4}\left(\int_0^T\left(Q(t,i/N)^2-Q^{i,N}(t)^2 \right)dt \right)^2\\
    &\quad +\left(\left(\int_0^TQ^{i,N}(t)^2dt \right)^{1/2} -\left(\int_0^TQ(t,i/N)^2dt\right)^{1/2}\right)^2 .
\end{align*}
Moreover,
\begin{align*}
\left|\int_0^T\left(Q(t,i/N)^2-Q^{i,N}(t)^2\right)dt\right|
\leq \int_0^T\left|Q(t,i/N)^2-Q^{i,N}(t)^2\right|dt.
\end{align*}
By Theorem \ref{lemmaQ} and the $\frac{1}{2}$-Hölder property of $\sqrt{x}$, there exists $C>0$, independent of $i\in [N]$, such that $\mathcal{W}_2\left(\nu_1,\nu_2 \right)\leq \frac{C}{\sqrt{N}}.$ 
This completes the proof. \qed
	\end{proof}

	The next theorem is the $N$-agent implementation result. The continuum optimizer $Q$
	defines a family of contract slopes indexed by type. Evaluating these slopes at the grid
	points $i/N$ gives a finite collection of contracts. The content of the theorem is that this
	sampled continuum contract is not only asymptotically optimal but also admissible
	in the finite-agent game: it satisfies the participation constraints, induces a unique
	Nash equilibrium, and loses only $\mathcal{O}(1/N)$ relative to the finite-agent optimum. Thus the
	continuum problem provides a practical approximation scheme for large finite contracting
	problems.
	\begin{theorem}\label{nearoptimal}
    Suppose Assumptions \ref{ass.G} and \ref{ass.reservation} hold. For each $N\in\mathbb{N}$, define the finite-agent contracts $\xi^{N,\mathrm{cont}} := (\xi^{1,N,\mathrm{cont}},\ldots,\xi^{N,N,\mathrm{cont}})$ by
    \begin{align}\label{contracts.epsilon}
        \xi^{i,N,\mathrm{cont}}
        :=&\ R_a(i/N)-\int_0^T\left(\frac{1}{2}Q(t,i/N)^2+Q(t,i/N)\frac{1}{N}\sum_{j=1}^NG\left(\frac{i}{N},\frac{j}{N}\right)X_t^{j,N}\right)dt \nonumber\\
        &\quad +\int_0^TQ(t,i/N)dX_t^{i,N}, \quad i\in [N].
    \end{align}
    The stochastic integral in \eqref{contracts.epsilon} is taken with respect to the finite-agent output process $X^{i,N}$. Then, $\xi^{N,\mathrm{cont}}$ satisfies the agents' reservation constraint, induces a unique Nash equilibrium, and there exists a constant $C>0$, independent of $N$, such that  $$V_p^N-J^N_p(\xi^{N,\mathrm{cont}})\leq \frac{C}{N}, $$
        for all $N \in \mathbb{N}$.
	\end{theorem}

    \begin{remark}

    This result is especially useful in fast-moving environments where there is little time for repeatedly solving the exact finite-agent problem. The principal may face different finite samples of agents over time, while the underlying interaction function remains relatively stable. Theorem~\ref{nearoptimal} then justifies applying the same continuum-designed contract to each sampled finite population as a scalable and quantitatively controlled approximation.
    \end{remark}

	\begin{proof}
		Consider the $N$-agent problem introduced in Subsection \ref{example1.finite.agents}. In this case, the agents' output process $\bar{X}^N$ satisfies
		\begin{align*}
			dX_t^{i,N} &=\left( \frac{1}{N}\sum_{j=1}^NG_{i,j}^NX^{j,N}_t+\alpha_t^i\right)dt +dW^{i,\alpha}_t,\quad \PP^{\alpha}-a.s.\quad i\in [N],\\
			X_0^{i,N} &=x_0^{i,N}.
		\end{align*}
		Assume the principal offers the contracts $\xi^{N,\mathrm{cont}}$ to the agents. Then every agent $i \in [N]$ maximizes the following objective:
        \begin{align*}
            J_a^i(\alpha^i,\xi^{i,N,\mathrm{cont}}) = R_a(i/N) +\EE^{\PP^\alpha}\left[\int_0^T\left(Q(t,i/N)\alpha^i_t-\frac{1}{2}(\alpha_t^i)^2-\frac{1}{2} Q(t,i/N)^2\right)dt\right].
        \end{align*}
        One verifies that there exists a unique Nash equilibrium given by
		\begin{align*}
        \hat{Q}^N(t) &:= \big(Q(t,{i}/{N})\big)_{i=1}^N, \quad t \in [0,T].
		\end{align*}
    Moreover, $V_a^{i}(\xi^{i,N,\mathrm{cont}})=R_a(i/N)$, implying $\xi^{N,\mathrm{cont}}\in \Sigma_a^N$.
     This equilibrium induces the probability measure $\PP^{\hat{Q}^N}$ defined as follows:
       \begin{align*}      \frac{d\PP^{\hat{Q}^N}}{d\WW} := \mathcal{E}\left(\int_0^\cdot \sum_{i=1}^N\left(\frac{1}{N}\sum_{j=1}^NG^N_{i,j}X_t^{j,N}+Q(t,i/N) \right)dX^{i,N}_t\right)_T.
       \end{align*}
        In equilibrium, the output process satisfies the following dynamics:
		\begin{align}\label{eqn.x.q.hat}
			dX^{i,N}_t = \left(\frac{1}{N}\sum_{j=1}^NG^N_{i,j}X^{j,N}_t+Q(t,i/N)\right)dt + dW^{i,\hat{Q}^N}_t, \PP^{\hat{Q}^N}-a.s.,\quad i\in [N].
		\end{align}
        From the principal's point of view,
\begin{align*}
			J_P^N(\xi^{N,\mathrm{cont}}) &= \EE^{\PP^{\hat{Q}^N}}\left[\frac{1}{N}\sum_{i=1}^N (X^{i,N}_T-\xi^{i,N,\mathrm{cont}}) \right] \\
            &= \EE^{\PP^{\hat{Q}^N}}\left[-\frac{1}{N}\sum_{i=1}^NR_a(i/N)+\frac{1}{N}\sum_{i=1}^NX^{i,N}_T-\frac{1}{2N}\sum_{i=1}^N\int_0^T Q(t,i/N)^2dt\right].
		\end{align*}
		Similarly, Theorem \ref{opt.contract.finite.example1} implies that
		the contracts $\bar{\xi}^{N,*}:=(\xi^{i,N,*})_{i=1}^N$ given by
		\begin{align*}
			\xi^{i,N,*} := R_a(i/N) - \int_0^T\left( \frac{1}{2}Q^{i,N}(t)^2+Q^{i,N}(t)\frac{1}{N}\sum_{j=1}^NG^N_{i,j}X^{j,N}_t\right)dt+\int_0^T{Q}^{i,N}(t)dX^{i,N}_t, \quad i \in [N],
		\end{align*}
		are optimal for the $N$-agent problem. Moreover, again by Theorem \ref{opt.contract.finite.example1}, the contracts $\bar{\xi}^{N,*}$ induce a unique Nash equilibrium of the agents given by
        \begin{align*}
            \bar{Q}^N(t):=\left(Q^{i,N}(t)\right)_{i=1}^N, \quad t \in [0,T].
        \end{align*}
        Therefore,
		\begin{align*}
			V_p^N &= J^N_p(\bar{\xi}^{N,*}) \\
            &= \EE^{\PP^{\bar{Q}^N}}\left[-\frac{1}{N}\sum_{i=1}^N R_a(i/N)+\frac{1}{N}\sum_{i=1}^N {X}_T^{i,N}-\frac{1}{2N}\sum_{i=1}^N\int_0^T \left(Q^{i,N}(t)\right)^2dt\right],
		\end{align*}
		where
		\begin{align*}
			d{X}^{i,N}_t &= \left(\frac{1}{N}\sum_{j=1}^NG^{N}_{i,j}{X}^{j,N}_t+Q^{i,N}(t)\right)dt +d{W}^{i,\bar{Q}^N}_t,\quad \PP^{\bar{Q}^N}-a.s.\\
			X_0^{i,N} &=x_0^{i,N}.
		\end{align*}
        Here $\bar{Q}^N:=(Q^{1,N},\ldots,Q^{N,N})$ solves the linear $N$-dimensional ODE \eqref{ODE.system}. Thus,
		\begin{align}\label{Value.aprox}
		V_p^N-J_p^N(\xi^{N,\mathrm{cont}})\leq \frac{1}{2}\int_0^T\sup_{i\in [N]}\left|Q\left(s,\frac{i}{N}\right)^2-Q^{i,N}(s)^2 \right|ds +\sup_{i\in [N]}\bigg|\EE^{\hat{Q}^N}[X^i_T]-\EE^{\bar{Q}^N}[X^i_T]\bigg|.
		\end{align}
		Next, for all $t\in [0,T]$, standard arguments give
		\begin{align*}
			\sup_{i\in[N]}\bigg|\EE^{\PP^{\hat{Q}^N}}\left[X^{i,N}_t\right]-\EE^{\PP^{\bar{Q}^N}}\left[{X}^{i,N}_t\right]\bigg| &\leq T\left(\sup_{i\in [N]}\sup_{t\in [0,T]}\big|Q^{i,N}(t)-Q(t,i/N)\big|\right)\\
            &+\|G\|_{\infty}\int_0^t\sup_{i\in[N]}\bigg|\EE^{\PP^{\hat{Q}^N}}\left[X^{i,N}_s\right]-\EE^{\PP^{\bar{Q}^N}}\left[{X}^{i,N}_s\right]\bigg|ds.
            \end{align*}
		Using Grönwall's inequality, there exists a constant $C_1>0$ such that
        \begin{align*}			\sup_{t\in [0,T]}\sup_{i\in[N]}\Big|\EE^{\PP^{\hat{Q}^N}}\left[X^{i,N}_t\right]-\EE^{\PP^{\bar{Q}^N}}\left[X^{i,N}_t\right] \Big|\leq C_1\sup_{i\in [N]}\sup_{t\in [0,T]}\big|Q^{i,N}(t)-Q(t,i/N)\big|.
        \end{align*}
        Moreover, applying ODE estimates, we have
         \begin{align*}			\sup_{i\in[N]}\Big|Q(t,i/N)^2-\left(Q^{i,N}(t)\right)^2 \Big| \leq 2e^{\|G\|_{\infty}T}\sup_{i\in [N]}\sup_{t\in [0,T]}\big|Q^{i,N}(t)-Q(t,i/N)\big|.
        \end{align*}
       Substituting the previous two inequalities into \eqref{Value.aprox} and using Theorem \ref{lemmaQ}, there exists a constant $C>0$ such that
        $$V_p^N-J^N_p(\xi^{N,\mathrm{cont}})\leq\frac{C}{N}.$$
       This proves the theorem.
		\qed
		\end{proof}
The table below summarizes the convergence rates shown in this section.
\begin{table}[htbp]
	\centering
	\small
	\caption{Summary of convergence rates in Subsection~\ref{subsectionconvergence}.}
	\label{tab:rates-summary}
	\begin{tabular}{p{0.23\textwidth}p{0.4\textwidth}p{0.1\textwidth}p{0.1\textwidth}}
		
		\toprule
		Quantity & Comparison & Rate & Reference \\
		\midrule
		Effort processes & $\max_{i\in[N]}\sup_{t\in[0,T]}|Q(t,i/N)-Q^{i,N}(t)|$ & $\mathcal{O}(N^{-1})$ & Theorem~\ref{lemmaQ} \\
		Principal value & $\mathbb{E}^{\lambda}[(V_p^N-V_p)^2]$ & $\mathcal{O}(N^{-1})$ & Theorem~\ref{conv.Value} \\
		Optimal-contract law & $\sup_{i\in[N]}\mathcal{W}_2\big(K_G(i/N,\cdot),\PP^{\bar{Q}^N}\circ(\xi^{i,N,*})^{-1}\big)$ & $\mathcal{O}(N^{-1/2})$ & Theorem~\ref{conv.Value} \\
		Scalable optimal contract loss & $V_p^N-J_p^N(\xi^{N,\mathrm{cont}})$ & $\mathcal{O}(N^{-1})$ & Theorem~\ref{nearoptimal} \\
		\bottomrule
	\end{tabular}
\end{table}

\subsubsection{Numerical convergence of the finite-agent approximation.}
We close with a numerical check of the convergence results. For each example, we solve the corresponding
$N$-agent problem for different values of $N$. Let $\bar{Q}^N$ denote the solution to the $N$-agent problem, and let $V^N_p$ be the associated principal value. We compare these quantities with the continuum solution $Q$ and $V_p$, respectively. Since the optimal action is given by $\alpha^*=Q$, the convergence of the optimal efforts is measured by the discrete $L^2$-error
\begin{align*}
    \|Q-\bar{Q}^N\|_{L^2} :=  \sqrt{\frac{T}{NM}\sum_{i=1}^N\sum_{j=1}^M\big|Q(t_j,i/N)-Q^{i,N}(t_j)\big|^2},
\end{align*}
where $t_j = \frac{(j-1)T}{M}$, $j\in [M]$, and $M$ is the length of the time discretization.

Convergence of the principal value is measured by the absolute error $\big| V_p^N -V_p\big|$.
\begin{figure}[H]
	\centering
	\includegraphics[width=0.90\textwidth]{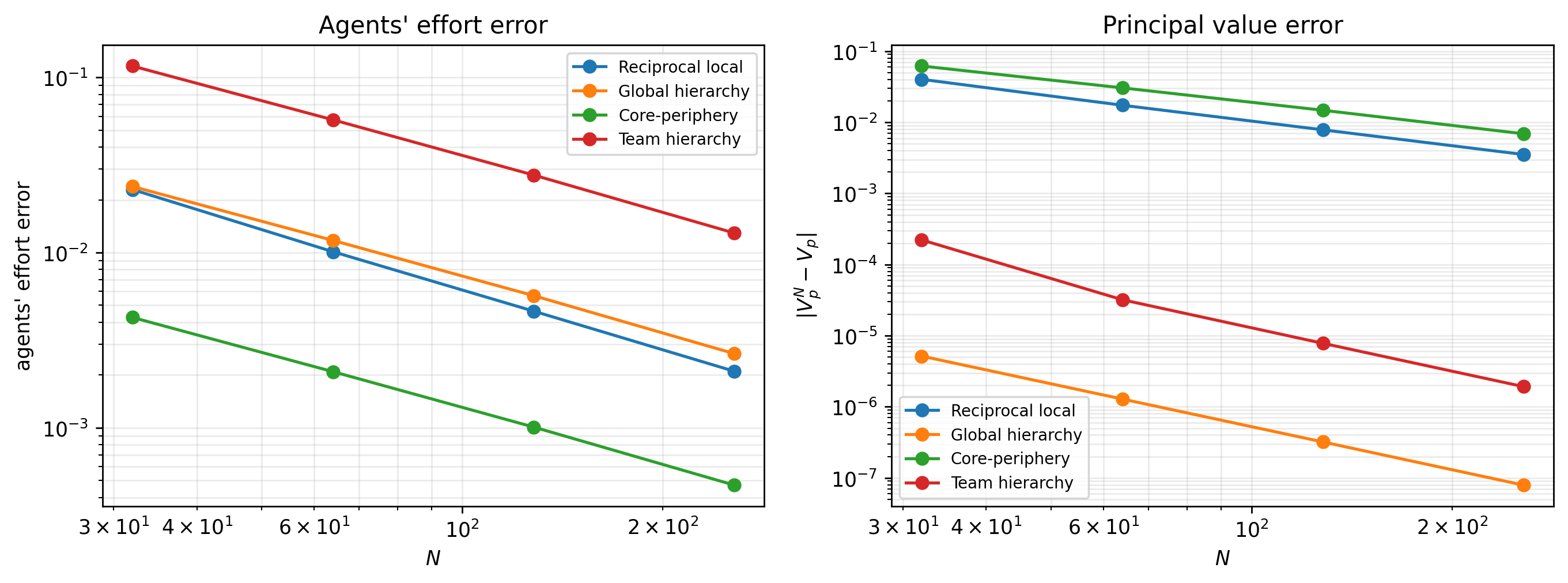}
	\caption{Convergence of the agents' optimal efforts and principal value across the four normalized interaction-function examples.}
	\label{fig:convergence-actions-value}
\end{figure}
\section{Conclusion}

This paper studies optimal contracting in large populations of heterogeneous agents whose outputs are linked through network spillovers. We model these spillovers by an interaction function and show that, in a linear-quadratic continuous-time setting, the optimal contract can be characterized explicitly in both the finite-agent problem and its continuum limit. The optimal incentive assigned to an agent is determined by the solution of a linear backward equation, and this solution has a direct economic interpretation: it is both the slope of the performance-based component of the contract and the equilibrium effort induced from that agent. Thus, the contract rewards agents according to their position in the interaction structure, with steeper incentives assigned to agents whose effort generates larger spillover effects.

The continuum formulation provides a tractable approximation to otherwise high-dimensional finite-agent contracting problems. By encoding heterogeneity through the type variable and the joint law of type and output, we reduce the continuum principal-agent problem to a mean-field control problem and obtain an explicit optimal contract. We also show that the continuum solution is stable with respect to perturbations of the interaction function. This stability result is important for applications in which the principal may estimate the interaction structure only approximately.

A central implication of the analysis is that the continuum contract can be used as a scalable contract for large finite populations. Evaluating the continuum incentive rule on a finite grid of agent types yields admissible finite-agent contracts that satisfy the agents' reservation constraints, induce a unique Nash equilibrium, and achieve the finite-agent principal's value up to an error of order $1/N$. The numerical examples illustrate this approximation property and show how different interaction structures, such as reciprocal local interactions, global hierarchies, core-periphery networks, and team hierarchies, affect equilibrium efforts, compensation, and the principal's value.

Several extensions remain natural directions for future work. One direction is to study more general preferences, including risk-averse principals or agents, nonlinear production technologies, and richer cost structures. Another is to allow for common noise or aggregate shocks, which would make the interaction between individual incentives and population-level uncertainty more explicit. It would also be useful to study settings in which the interaction function is learned from data or evolves over time. These extensions would further connect the tractable continuum approach developed here with contracting problems in large, dynamic, and partially observed economic networks.
\bibliographystyle{alpha}
\bibliography{main}

\end{document}